\documentclass[twocolumn,10pt]{IEEEtran}

\usepackage{soul, xcolor}
\usepackage{booktabs}
\usepackage{cite}
\usepackage{graphicx}
\usepackage{psfrag}
\usepackage{url}
\usepackage{amsmath}
\usepackage{array}
\usepackage{amssymb}
\usepackage{mathtools}
\usepackage{amsfonts}
\usepackage{graphicx}
\usepackage{epstopdf}
\usepackage{algorithm}
\usepackage{cite,algorithm,algorithmic,amsmath,amssymb,amsthm,empheq,mhsetup}
\usepackage{algorithmic}
\newtheorem{proposition}{Proposition}

\newtheorem{remark}{Remark}

\newtheorem{theorem}{Theorem}
\newtheorem{assumption}{Assumption}
\newtheorem{conjecture}{Conjecture}

\usepackage{setspace}
\usepackage{amscd}
\usepackage{mathrsfs}
\usepackage{epsfig}
\usepackage{color}
\usepackage{textcomp}
\usepackage{float}
\usepackage{pgf}
\usepackage{pgfplots}
\usepackage{tikz}
\usetikzlibrary{spy}
\usepackage{eqparbox}
\usepgfplotslibrary{groupplots}
\usetikzlibrary{decorations.pathreplacing,calligraphy}
\usepackage{lipsum}  
\usepackage{nicematrix}
\usepackage{glossaries}
\usepackage{glossaries-extra}
\usepackage{gensymb}
\usepackage{soul}

\IEEEoverridecommandlockouts

\DeclareMathOperator{\trace}{Tr}

\DeclareMathOperator*{\argmin}{argmin}

\setabbreviationstyle[acronym]{long-short}
\newacronym{fwa}{FWA}{fixed wireless access}
\newacronym{mimo}{MIMO}{multiple input multiple output}
\newacronym{5g}{5G}{fifth generation}
\newacronym{iid}{i.i.d}{independent and identically distributed}
\newacronym{bs}{BS}{base station}
\newacronym{ue}{UE}{user equipment}
\newacronym{ap}{AP}{access point}
\newacronym{cpu}{CPU}{central processing unit}
\newacronym{upa}{UPA}{uniform planar array}
\newacronym{los}{LoS}{line-of-sight}
\newacronym{awgn}{AWGN}{additive white Gaussian noise}
\newacronym{isp}{ISP}{internet service provider}
\newacronym{mmse}{MMSE}{minimum mean square error}
\newacronym{mse}{MSE}{mean square error}
\newacronym{zf}{ZF}{zero-forcing}
\newacronym{mrc}{MRC}{maximum ratio combiner}
\newacronym{mrt}{MRT}{maximum ratio transmission}
\newacronym{se}{SE}{spectral efficiency}
\newacronym{csi}{CSI}{channel state information}
\newacronym{snr}{SNR}{signal-to-noise ratio}
\newacronym{sinr}{SINR}{signal-to-interference-plus-noise ratio}

\DeclareMathOperator{\rank}{\mathrm{rank}}
\DeclareMathOperator{\EEE}{\mathbb{E}}
\DeclareMathOperator{\var}{\mathrm{Var}}

\DeclareMathOperator{\C}{\mathbb{C}}
\DeclareMathOperator{\F}{\mathbf{F}} 
\DeclareMathOperator{\f}{\mathbf{f}}
\DeclareMathOperator{\aaa}{\mathbf{a}}

\DeclareMathOperator{\SSSS}{\mathbf{S}}

\DeclareMathOperator{\OO}{\mathcal{O}}

\DeclareMathOperator{\E}{\mathbf{E}}
\DeclareMathOperator{\V}{\mathbf{V}}
\DeclareMathOperator{\vv}{\mathbf{v}}
\DeclareMathOperator{\w}{\mathbf{w}}
\DeclareMathOperator{\z}{\mathbf{z}}

\DeclareMathOperator{\h}{\mathbf{h}}
\DeclareMathOperator{\HH}{\mathbf{H}}

\DeclareMathOperator{\R}{\mathcal{R}}

\DeclareMathOperator{\RR}{\mathbf{R}}

\DeclareMathOperator{\G}{\mathbf{G}}
\DeclareMathOperator{\D}{\mathbf{D}}

\DeclareMathOperator{\TT}{\mathcal{T}}

\DeclareMathOperator{\CN}{\mathcal{CN}}
\DeclareMathOperator{\A}{\mathbf{A}}

\DeclareMathOperator{\B}{\mathbf{B}}
\DeclareMathOperator{\bb}{\mathbf{b}}

\DeclareMathOperator{\N}{\mathbf{N}}
\DeclareMathOperator{\NN}{\mathcal{N}}

\DeclareMathOperator{\CC}{\mathbf{C}}

\DeclareMathOperator{\W}{\mathbf{W}}

\DeclareMathOperator{\PP}{\mathbf{P}}
\DeclareMathOperator{\p}{\mathbf{p}}
\DeclareMathOperator{\II}{\mathbf{I}}

\DeclareMathOperator{\x}{\mathbf{x}}

\DeclareMathOperator{\y}{\mathbf{y}}

\DeclareMathOperator{\Q}{\mathbf{Q}}

\DeclareMathOperator{\g}{\mathbf{g}}

\DeclareMathOperator{\X}{\mathbf{X}}
\DeclareMathOperator{\Y}{\mathbf{Y}}
\DeclareMathOperator{\Z}{\mathbf{Z}}
\DeclareMathOperator{\0}{\mathbf{0}}

\DeclareMathOperator{\GAMMA}{\boldsymbol{\Gamma}}

\DeclareMathOperator{\LAMBDA}{\boldsymbol{\Lambda}}

\DeclareMathOperator{\DELTA}{\boldsymbol{\Delta}}

\DeclareMathOperator{\vecc}{\mathrm{vec}}
\DeclareMathOperator{\Cov}{\mathrm{Cov}}
\DeclareMathOperator{\Bias}{\mathrm{Bias}}
\DeclareMathOperator{\MSE}{\mathrm{MSE}}

\makeatletter
\newcommand\fs@spaceruled{\def\@fs@cfont{\bfseries}\let\@fs@capt\floatc@ruled
  \def\@fs@pre{\vspace{0.5\baselineskip}\hrule height.7pt depth0pt \kern2pt}%
  \def\@fs@post{\kern2pt\hrule\relax}%
  \def\@fs@mid{\kern2pt\hrule\kern2pt}%
  \let\@fs@iftopcapt\iftrue}
  
\newsavebox\myboxA
\newsavebox\myboxB
\newlength\mylenA
\newcommand*\mybar[2][0.66]{
	\sbox{\myboxA}{$\m@th#2$}
	\setbox\myboxB\null
	\ht\myboxB=\ht\myboxA
	\dp\myboxB=\dp\myboxA
	\wd\myboxB=#1\wd\myboxA
	\sbox\myboxB{$\m@th\overline{\copy\myboxB}$}
	\setlength\mylenA{\the\wd\myboxA}
	\addtolength\mylenA{-\the\wd\myboxB}
	\ifdim\wd\myboxB<\wd\myboxA
	\rlap{\hskip 0.5\mylenA\usebox\myboxB}{\usebox\myboxA}
	\else
	\hskip -0.5\mylenA\rlap{\usebox\myboxA}{\hskip 0.5\mylenA\usebox\myboxB}
	\fi}
  
\makeatother

\usepackage{tkz-euclide}
\tikzset{
	bs/.pic = {                                      
		\draw[line width = 1pt,-round cap] (0,0.4\R) -- (-0.2\R,-0.2\R);
		\draw[line width = 1pt,-round cap] (0,0.4\R) -- (0.2\R,-0.2\R);
		\draw[line width = 1pt] (-0.2\R,-0.2\R) -- (0.133\R,0);
		\draw[line width = 1pt] (0.2\R,-0.2\R) -- (-0.133\R,0);
		\draw[line width = 1pt,-round cap] (-0.133\R,0) -- (0.067\R,0.2\R);
		\draw[line width = 1pt,-round cap] (0.133\R,0) -- (-0.067\R,0.2\R);
		\draw[line width = 1pt,-round cap] (-0.213\R,0.4\R) -- (0.213\R,0.4\R);
		\draw[line width = 1pt,-round cap] \foreach \x in {-0.21, -0.14,...,0.22} {(\x\R,0.4\R) -- (\x\R,0.47\R)};
		\node (dim) at (0,0)  [align=center,minimum width=0.5\R,minimum height=1\R] {};
	},
}

\tikzset{
	user/.pic = {     
		\draw[rounded corners=0.02\R] (-0.09\R,-0.17\R) rectangle (0.09\R,0.17\R) ;                     
		\draw[fill=gray] (-0.08\R,-0.12\R) rectangle (0.08\R,0.12\R);                          
		\draw[rounded corners=0.005\R] (-0.04\R,0.14\R) rectangle (0.04\R,0.15\R);                     
		\draw (0,-0.14\R) circle (0.012\R) ; 
		\node (dim) at (0,0)  [align=center,minimum width=0.2\R,minimum height=0.3\R] {};
	}
}

\usetikzlibrary{external}
\usepgfplotslibrary{external}

\allowdisplaybreaks
\begin{document}

\title{Leveraging Slowly Time-Varying AP-AP Channels for Interference Mitigation in Dynamic TDD}

\author{Martin Andersson,~\IEEEmembership{Graduate Student Member,~IEEE,}
    Tung T. Vu,~\IEEEmembership{Member,~IEEE,}
    \\
    Pål Frenger,
    Jan Åslund,
    and Erik G. Larsson,~\IEEEmembership{Fellow,~IEEE}
    
    \thanks{Parts of this paper were presented at the 2024 International Conference on Acoustics, Speech, and
    Signal Processing (ICASSP), see \cite{Andersson2024ICASSP}.}
    
    \thanks{M. Andersson, J. Åslund and E. G. Larsson are with the Department of Electrical Engineering (ISY), Link\"{o}ping University, 581 83 Link\"{o}ping, Sweden. Email: \{martin.b.andersson, jan.aslund, erik.g.larsson\}@liu.se.}
    
    \thanks{T. T. Vu is with the School of Computer, Data and Mathematical Sciences, Western Sydney University, Australia. Email: t.vu4@westernsydney.edu.au.}

    \thanks{P. Frenger is with Ericsson Research, 583 30 Link\"{o}ping, Sweden. Email: pal.frenger@ericsson.com.}
    
    \thanks{This work was partially supported by the Wallenberg AI, Autonomous Systems and Software Program (WASP) funded by the Knut and Alice Wallenberg Foundation, and partially supported by ELLIIT. The work of T. T. Vu was partially supported by the Australian Research Council Discovery Early Career Researcher Award (DECRA) under Project DE260100816.}
}

\maketitle
\thispagestyle{empty}

\begin{abstract}
    We address the challenge of cross-link interference in dynamic time-division duplexing (TDD) systems. Specifically, we focus on mitigating the interference caused by access points (APs) operating in downlink to APs operating in uplink. To this end, we exploit that channels between APs typically vary much more slowly over time than channels between users and APs. This observation allows us to jointly estimate the uplink user data and the AP-AP channels using a least-squares formulation over multiple coherence intervals, during which the AP-AP channels stay constant. We derive conditions for unique solvability of this least-squares problem by analyzing the rank of the regression matrix. For cases where a unique solution does not exist, we propose to transform the problem into a uniquely solvable one by sacrificing a subset of the uplink data samples. Numerical results demonstrate that our proposed methods achieve substantial gains over baseline algorithms. Further, we observe that one of our proposed algorithms achieves almost perfect AP-AP interference mitigation when the AP-AP channels vary very slowly over time.
\end{abstract}
\vspace{-3mm}
\begin{IEEEkeywords} 
    MIMO systems, dynamic TDD, cross-link interference mitigation, AP-AP interference, least-squares estimation.
\end{IEEEkeywords}

\glsresetall

\vspace{-6mm}
\section{Introduction}
\label{sec:intro}

Modern multiple-input multiple-output (MIMO) technologies, including massive (cellular) MIMO and distributed (cell-free) MIMO, are designed to operate in time-division duplexing (TDD) with half-duplex access points (APs) \cite{ngo16,emil20TWC,ammar22CST}. Conventionally, these systems operate in \emph{static} TDD, where the fraction of time allocated to uplink (UL) respectively downlink (DL) transmissions is fixed and aligned among all APs. Contrarily, in \emph{dynamic} TDD operation, each individual AP can adjust its UL and DL allocation in every coherence interval. This allows dynamic TDD systems to cater asymmetric and time-varying UL and DL traffic by adjusting the allocation based on the instantaneous data demands \cite{Andersson2023Asilomar}. However, dynamic TDD introduces cross-link interference among APs and among user equipment (UEs), and this interference must be managed to maximize the benefits of dynamic TDD \cite{KimCST}.

Several studies have shown advantages of dynamic TDD over static TDD, predominantly in the special case where each AP is restricted to operate in either UL or DL throughout the entire frame, i.e., the APs do not switch between UL and DL within a frame \cite{Mohammadi2023JSAC,Chowdhury2024TCOM}. This operation mode is often referred to as network-assisted full-duplex and will be considered in this work. Further, dynamic TDD has been discussed and is featured in the 3GPP standards \cite{5GNR_crosslink, 3GPP_config}, which shows its potential for implementation in practical systems.

The use of half-duplex APs makes dynamic TDD a cost-effective flexible duplexing mode. It can be enabled in existing MIMO systems, in contrast to full-duplex schemes that require additional complex and costly hardware for self-interference cancellation \cite{Zhang2015CM, Smida2024PI,Kim2024PI,Mohammadi2025TCOM}. In fact, networks with half-duplex APs and dynamic TDD operation can outperform corresponding full-duplex networks, both in terms of spectral and energy efficiency \cite{chowdhury2021TC, Chowdhury2024TCOM, Mohammadi2023JSAC}. For full-duplex to compete with dynamic TDD, the self-interference within the full-duplex APs must be highly mitigated, enforcing strict hardware requirements.

Although dynamic TDD systems do not suffer from self-interference, the cross-link interference remains. Specifically, the AP-AP interference (AAI), which occurs when neighboring APs operate in opposite transmission directions on the same time-frequency resources, can significantly limit the UL data rates in dynamic TDD unless mitigated \cite{Andersson2024ICASSP, KimCST}.

\subsection{State-of-the-Art in AP-AP Interference Mitigation}

Existing methods for AAI mitigation primarily rely on strict coordination of the AP scheduling, beamforming, or power control. One exception is \cite{Huang2019SJ}, wherein the authors show that the AAI can be perfectly mitigated in dynamic TDD massive MIMO systems when the number of AP antennas per UE tends to infinity. However, several hundreds of antennas per AP are required to get close to this limit, 
which is not always feasible in practical systems.

More specifically, the papers \cite{Zhu2021CL, Razlighi2021TWC,fukue22A,Mohammadi2023JSAC,Chowdhury2024TCOM} optimize the system throughput by scheduling the transmission modes (UL or DL) of the APs in each coherence interval. In \cite{Zhu2021CL}, the AP mode selection is optimized in order to maximize the sum UL/DL SE, whereas \cite{Mohammadi2023JSAC} and \cite{Chowdhury2024TCOM} formulate similar problems that also include AP power control in the optimization problem. Further, \cite{fukue22A} maximizes the geometric mean of the per-UE spectral efficiency to achieve fairness, including the DL beamforming design in the optimization problem. 

The existing works closest to ours consider coordinated AP-AP beamforming methods where the transmit precoding at the DL APs or the receive combining at the UL APs is designed to suppress AAI \cite{Yoon2015CL,deOlivindo2018WCL,Jayasinghe2018TSP,daSilva2021WC}. For example, \cite{daSilva2021WC} applies zero-forcing receive combining at the UL APs to nullify the AAI from the DL APs, while \cite{deOlivindo2018WCL} designs the beamforming at the DL APs such that the AAI at the UL APs is constrained. 

An important limitation of the existing methods for AAI mitigation, specifically those based on coordinated beamforming cited above, is their reliance on precise AP-AP channel knowledge; they require frequent estimation of these channels. This introduces a vast overhead for transmission of pilot sequences between APs for AP-AP channel estimation. During this pilot phase, the DL APs must halt their data transmission (reducing the DL pre-log factor) and the UL UEs must be silent to avoid corrupting the AP-AP channel estimates at the UL APs (reducing the UL pre-log factor), which heavily limits performance and scalability. Furthermore, most of these methods require adaptation of the DL transmissions (e.g., through DL beamforming design) in order to alleviate the AAI, wasting degrees of freedom on interference mitigation and impairing the DL data rates. 

\vspace{-3mm}
\subsection{Contributions of the Paper}

In this paper, we develop an alternative strategy by exploiting that the UL APs know the data signals transmitted by the DL APs. This lets us use the DL transmit signals as implicit pilots to avoid an overhead for AP-AP channel estimation. However, these DL transmit signals are rank-deficient, and hence cannot be used for AP-AP channel estimation within a single coherence interval. But provided that the AP-AP channels are slowly time-varying (compared to the UE-AP channels)\footnote{AP-AP channels varying more slowly over time than UE-AP channels is a robust physical attribute. This is because APs are static in space whereas UEs are mobile, implying faster variations in the UE-AP channels.}, the same AP-AP channel realization will experience multiple independent DL signals, which allows us to observe all dimensions of the channel matrix over time.

More specifically, we formulate the AAI mitigation problem as a joint estimation of the UL data and the AP-AP channel over multiple coherence intervals during which the AP-AP channel is constant. 
Mathematically, this gives rise to a coupled (two-sided) least-squares problem of the form
\begin{align}
    \label{eq:LS}
    \min_{ \{\X_t\}, \HH } \ \sum_{t \in \TT} \| \Y_t - \G_t \X_t - \HH \V_t \SSSS_t \|^2,
\end{align}
where $\{ \X_t \}$ are the UL data matrices, and $\HH$ is the  AP-AP channel. The matrices $\{ \Y_t \}$, $\{ \G_t \}$, $\{ \V_t \}$, and $\{ \SSSS_t \}$ are known. (All notation will be defined in detail later.)

Solving \eqref{eq:LS} is straightforward assuming that a unique solution exists. However, to the best of our knowledge, the existence of a unique solution to problems of the form \eqref{eq:LS} has not been analyzed in the literature, not even from a purely mathematical perspective. Some results exist on the (unique) solvability of matrix equations of the form $ \Y = \G \X + \HH \V \SSSS$ with respect to $\X$ and $\HH$ (which corresponds to our scenario with $| \TT | = 1$ in the noise-free case) \cite[Ch. 4]{horn1991topics}, \cite{Roth1952,Baksalary1979LAA}. However, these results do not generalize to equation systems that define the solution of \eqref{eq:LS} (with a summation over $t$). Other studies have analyzed general systems of coupled matrix equations, whereof our scenario is a special case. For example, \cite{Ding2006SIAM} develops iterative algorithms to solve systems of coupled matrix equations, but it assumes that a unique solution of \eqref{eq:LS} exists and does not characterize detailed conditions for the existence of a unique solution. Problem \eqref{eq:LS} can also be expressed on vector form using the Khatri-Rao product \cite{khatri1968}, and one possible approach could be to bound the rank of the resulting effective regression matrix, using results from multilinear decomposition theory. Specifically,
\cite{Sidiropoulos2000JoC} and \cite{DeLathauwer2008SIAM} give $k$-rank and $k'$-rank bounds for Khatri-Rao products, which could in principle be used to bound the rank of our regression matrix. But these rank bounds are too weak to be useful in our context. 

Our technical contributions are:
\begin{enumerate}
    \item We formulate the AAI mitigation problem as in \eqref{eq:LS}, by leveraging the slow time-variation of the AP-AP channels and knowledge of the DL transmit signals. Further, we calculate the mean-square error (MSE) and derive an achievable spectral efficiency (SE) for the UL data symbol estimates given by the solution to \eqref{eq:LS}. Numerical simulations show that our proposed method enables effective AAI mitigation and significantly outperforms baseline receivers both in terms of MSE and SE. When the AP-AP channel varies very slowly over time, the performance of our proposed technique approaches that of a genie-aided baseline with perfect AAI mitigation.

    \item For cases where \eqref{eq:LS} does not have a unique solution, we propose a modified (uniquely solvable) least-squares problem by silencing a subset of the UL data symbols in each coherence interval.
    
    \item As a methodological contribution to least-squares estimation, we investigate the existence of a unique solution to \eqref{eq:LS} by analyzing the rank of its regression matrix.  We provide the following specific results (we refer to Section \ref{sec:analysis} for the formal statements):
    \begin{itemize}
        \item \textbf{Theorem \ref{theorem:dimNA}:} A \emph{necessary} condition in terms of $| \TT |$ and the dimensions of the matrices in \eqref{eq:LS}, for which the least-squares problem has a unique solution.
        \item \textbf{Theorem \ref{theorem:M=N=2K=2J}:} The necessary condition for identifiability in Theorem \ref{theorem:dimNA} is \emph{sufficient} in the case where there are equally many UL and DL AP antennas, and the number of UL/DL UEs present in the network is half of the number of AP antennas.
        \item \textbf{Theorem \ref{theorem:T=1}:} In the case $| \TT | = 1$,  the least-squares problem \eqref{eq:LS} is \textit{never} identifiable.
    \end{itemize}
    
\end{enumerate}

This paper is a comprehensive extension of our conference paper  \cite{Andersson2024ICASSP},
which was limited to single-slot processing (i.e, $| \TT |=1$ above) and to the case when $\V_t\SSSS_t$ has rank one.

\subsection{Notation}

We denote scalars by regular lowercase symbols. Bold symbols in lowercase and uppercase are column vectors and matrices, respectively. The set of complex numbers is denoted by $\C$; the sets of complex vectors and matrices include a superscript of their dimension. We use $\left( \cdot \right)^*$, $\left( \cdot \right)^T$, $\left( \cdot \right)^H$ for complex conjugate, transpose, and Hermitian transpose. The Kronecker product is denoted by $\otimes$ and the vectorization operator by $\vecc\left( \cdot \right)$. For square matrices, we use $\trace\left(\cdot\right)$, $\det\left(\cdot\right)$, and $\left( \cdot \right)^{-1}$ for the trace, determinant, and matrix inverse. The nullspace of a matrix is denoted by $\NN\left( \cdot \right)$ and the dimension of a vector space by $\dim\left( \cdot \right)$. The covariance matrix, bias, and mean square error of a random vector are, respectively, denoted by $\Cov\left( \cdot \right)$, $\Bias\left( \cdot \right)$, and $\MSE\left( \cdot \right)$. We use $\CN\left( \cdot, \cdot \right)$ to denote a complex Normal distribution. The notation $\| \cdot \|$ represents the Euclidean norm of a vector or the Frobenius norm of a matrix. For matrices, the notation $\left[ \cdot \right]_{i:j, k:l}$ is the submatrix obtained by taking the intersection of rows $i,\dots,j$ and columns $k,\dots,l$.

\begin{figure*}[h!]
    \centering
    \begin{tikzpicture}
        \newdimen\R
        \R=2cm
        \draw (1,0) pic (bsA) {bs} node [xshift=-6mm, yshift=8mm]{};
        \draw (7,0) pic (bsB) {bs} node [xshift=6mm, yshift=8mm] {};
        \draw (6,-1.7) pic (dlue1) {user} node [yshift=-6mm]{$1$} ;
        \draw (6.7,-1.8) pic (dlue2) {user} node [yshift=-6mm]{$j$} ;
        \draw (8,-1.7) pic (dlue3) {user}  node [yshift=-6mm]{$J$} ;
        \draw (0,-1.7) pic (ulue1) {user} node [yshift=-6mm]{$1$} ;
        \draw (0.7,-1.8) pic (ulue2) {user} node [yshift=-6mm]{$k$} ;
        \draw (2,-1.7) pic (ulue3) {user}  node [yshift=-6mm]{$K$} ;
        \draw[<-,thick,shorten >= 0.2cm] (0.8,-0.5) to node [align=center,midway,xshift=4mm,yshift=0mm]{} (ulue1dim.north) ;
        \draw[<-,thick,shorten >= 0.2cm] (1,-0.5) to node [align=center,midway,xshift=4mm,yshift=0mm]{{ }} (ulue2dim.north) ;
        \draw[<-,thick,shorten >= 0.2cm] (1.2,-0.5) to node [align=center,midway,xshift=4mm,yshift=0mm]{} (ulue3dim.north);
        \draw[->,thick,shorten >= 0.2cm] (6.8,-0.5) to node [align=center,midway,xshift=4mm,yshift=0mm]{} (dlue1dim.north) ;
        \draw[->,thick,shorten >= 0.2cm] (7,-0.5) to node [align=center,midway,xshift=4mm,yshift=0mm]{{ }} (dlue2dim.north) ;
        \draw[->,thick,shorten >= 0.2cm] (7.2,-0.5) to node [align=center,midway,xshift=4mm,yshift=0mm]{} (dlue3dim.north);
        
        \draw[thick, blue] (1,-0.76) ellipse (0.8cm and 0.35cm) node [align=center,xshift=20mm,yshift=0mm]{Data channels: \\ $\G_t \in \C^{M \times K}$};
        
        \draw[red,thick,<-] (bsAdim.north) to[out=45,in=135] node [align=center,midway,xshift=0mm,yshift=-7mm]{Interference channel: \\ $ \HH \in \C^{M \times N}$} (bsBdim.north) ;

        \node  [align=left,left=5mm of bsAdim.north] {UL AP: \\ $\bullet$ $M$ antennas};
        \node  [align=left,right=5mm of bsBdim.north] {DL AP: \\ $\bullet$ $N$ antennas \\ $\bullet$ Data $\SSSS_t \in \C^{J \times \tau}$ \\ $\bullet$ Precoder $\V_t \in \C^{N \times J}$};
        \node  [align=left,left=5mm of ulue1dim.north] {UL UEs: \\ $\bullet$ Data $\X_t \in \C^{K \times \tau}$};
        \node  [align=center,right=5mm of dlue3dim.north] {DL UEs};         
        \node at (1.2,-1.78)[circle,fill,inner sep=0.8pt]{};
        \node at (1.35,-1.76)[circle,fill,inner sep=0.8pt]{};
        \node at (1.5,-1.74)[circle,fill,inner sep=0.8pt]{};
        
        \node at (7.2,-1.78)[circle,fill,inner sep=0.8pt]{};
        \node at (7.35,-1.76)[circle,fill,inner sep=0.8pt]{};
        \node at (7.5,-1.74)[circle,fill,inner sep=0.8pt]{};
    \end{tikzpicture}
    \caption{Illustration of the considered MIMO system operating in dynamic TDD with AP-AP interference.}
    \label{Fig:system}
\end{figure*}

\section{System Model}
\label{sec:model}

We consider the UL data transmission in a MIMO system with dynamic TDD operation. Specifically, one AP equipped with $M$ antennas receives the UL data transmitted from $K$ single-antenna UEs. The reception of the UL data is interfered by the cross-link interference caused by another AP with $N$ antennas that transmits DL data to $J$ single-antenna UEs, on the same time-frequency resources.\footnote{We focus on the special case of dynamic TDD operation where UL and DL transmissions take place simultaneously in the entire duration of each coherence interval. In practice, the two APs might operate in separate directions only during a fraction of each coherence interval. Our theoretical framework is still applicable within that fraction. Importantly, our methods are agnostic to how the scheduling has been performed. We take a given UL/DL AP assignment and improve the performance by adding AP-AP interference mitigation on top of a given dynamic TDD schedule.
} Hereafter, we will refer to the AP/UEs operating in UL and DL as the UL and DL AP/UEs, respectively. Further, we will assume that $K < M$ and $J < N$. Our system model is illustrated in Fig. \ref{Fig:system}. 
\begin{remark}
    For clarity of presentation, we consider the case with co-located AP antennas and single-antenna UEs. However, our system model is general and covers the cases of arbitrarily distributed AP antennas and multi-antenna UEs.
\end{remark}
In the sequel, we refer to the cross-link interference from the DL AP to the UL AP as AP-AP interference (AAI). Our aim is to detect the data transmitted by the UL UEs to the UL AP, in the presence of the AAI caused by the DL AP.

We assume a block-fading model where all channels stay constant throughout each coherence interval. Suppose that the APs are static in space and the UEs are mobile, such that the AP-AP channel varies much more slowly over time (i.e., has a longer coherence time) than the UE-AP channels. Assuming that the coherence bandwidths of all channels are of the same order, the AP-AP channel will have a larger coherence interval length than the UE-AP channels. Concretely, let $\tau$ (samples) be the coherence interval length of the UE-AP channels and assume that the AP-AP channel stays constant throughout $T$ realizations $t \in \TT \triangleq \{ 1,\dots,T \}$ of the UE-AP channels.

We define the collective channel matrix from the $K$ UL UEs to the UL AP in coherence interval $t \in \TT$ as
\begin{align}
    \G_t = 
    \begin{bmatrix}
        \g_{1,t} & \dots & \g_{K,t}
    \end{bmatrix}
    \in \C^{M\times K},
\end{align}
where $\g_{k,t} \in \C^{M \times 1}$ is the channel vector from UL UE $k$ to the UL AP. Also, let 
\begin{align}
    \HH =
    \begin{bmatrix}
        \h_{1} & \dots & \h_{N}
    \end{bmatrix}
    \in \C^{M\times N}
\end{align}
be the interference channel from the DL AP to the UL AP, which is constant during all $t \in \TT$. Here, $\h_{n} \in \C^{M \times 1}$ is the channel from the $n$th DL AP antenna to the UL AP. We illustrate the coherence interval structure in Fig. \ref{Fig:block}.

Further, let $\X_t \in \C^{K \times \tau}$ be the collective matrix of data symbols transmitted from the $K$ UL UEs in coherence interval $t$. 
Similarly, let $\SSSS_t \in\C^{J \times \tau}$ be the data symbols from the DL AP intended for the $J$ DL UEs, and let $\V_t \in \C^{N \times J}$ be the corresponding precoding matrix. The \emph{effective} signal transmitted by the DL AP is $\V_t \SSSS_t \in \C^{N \times \tau}$.

The signal received by the UL AP in each coherence interval $t \in \TT$ can be expressed as
\begin{align}
    \label{Yt}
    \Y_t \triangleq \G_t \X_t + \HH \V_t \SSSS_t + \W_t \in \C^{M \times \tau},
\end{align}
where $\W_t$ is the receiver noise. The transmit power and power control coefficients are embedded in the UL channel matrices $\{ \G_t \}$ and the DL data matrices $\{ \SSSS_t \}$. We make an assumption.
\begin{assumption}
    \label{assum:UEAPchannels}
    The UL AP knows the UE-AP channels $\{ \G_t \}$, the DL data matrices $\{ \SSSS_t \}$ and the DL precoding matrices $\{ \V_t \}$, but not the AP-AP channel $\HH$.
\end{assumption}
In practice the UE-AP channels $\{ \G_t \}$ must be estimated in each coherence interval, which can be accomplished by inserting known pilot sequences into some columns of $\{ \X_t \}$. We assume that $ \{ \G_t \} $ are known to simplify the analysis and to let us focus solely on the AAI mitigation problem. Further, we assume that the two APs are connected through a backhaul network such that $\{ \SSSS_t \}$ and $\{ \V_t \}$ can be shared.

\section{Problem Formulation}
\label{sec:problem}

Here, we define the UL data detection\footnote{Herein, by ``data detection'' we mean obtaining soft estimates of the data symbols. These estimates
can then, in a standard manner, be used to obtain per-bit likelihood ratios
for decoding.} and AAI mitigation problem. We leverage the slow time variations of the AP-AP channel $\HH$ to enhance the detection of the UL data $\{ \X_t \}$. One possible approach would be to transmit designated AP-AP pilot sequences to estimate $\HH$ over multiple UE-AP coherence intervals, and use this estimate to mitigate the interference. However, such methods require a training overhead which harms the UL and DL data rates as the data transmission must be halted during the channel estimation phase. To overcome this, we use the known DL transmit signals $\{ \V_t \SSSS_t \}$ to enable AP-AP channel estimation without any training overhead. Specifically, we propose to detect the UL data and mitigate the AAI by solving the least-squares problem
\begin{align}
    \label{LS}
    \min_{ \{\X_t\}, \HH } \ \sum_{t \in \TT} \| \Y_t - \G_t \X_t - \HH \V_t \SSSS_t \|^2,
\end{align}
which is a joint estimation of the UL data matrices $\{\X_t\}$ and the AP-AP channel $\HH$ over all $T$ UE-AP coherence intervals $t \in \TT$, during which $\HH$ is constant. To solve \eqref{LS}, we first rewrite it in vector form. From \cite[Lemma 4.3.1]{horn1991topics}, for matrices $\GAMMA$, $\DELTA$, and $\LAMBDA$ of compatible dimensions, we have 
\begin{align}
    \label{veckron}
    \vecc(\GAMMA\DELTA\LAMBDA) = (\LAMBDA^T\otimes \GAMMA)\vecc(\DELTA).
\end{align}
Let $\y_t \triangleq \vecc(\Y_t)$, $\x_t \triangleq \vecc(\X_t)$, $\h \triangleq \vecc(\HH)$, and $\w_t \triangleq \vecc(\W_t)$. Then, we can apply \eqref{veckron} to obtain the vector form of the received signal $\Y_t$ in \eqref{Yt} as
\begin{align}
    \label{yt}
    \y_t \triangleq 
    \begin{bmatrix}
        (\II_{\tau}\otimes \G_t) & (\SSSS_t^T \V_t^T \otimes \II_{M})
    \end{bmatrix} 
    \begin{bmatrix}
        \x_t \\ \h
    \end{bmatrix}
    + \w_t.
\end{align}
To formulate the joint estimation problem over all $t \in \TT$, we gather all received signals $\{ \y_t\}$ in the vector
\begin{align}
    \label{y}
    \y \triangleq
    \begin{bmatrix}
        \y_1 \\ \vdots \\ \y_T
    \end{bmatrix} 
    = \A\z + \w \in \C^{TM\tau \times 1},
\end{align}
where the regression matrix $\A \in \C^{TM\tau \times (TK\tau + MN)}$ is
\begin{align}
    \label{A}
    \A & \triangleq
    \begin{bmatrix}
        \II_{\tau}\otimes \G_1 & & \0 & \SSSS_1^T \V_1^T \otimes \II_{M}
        \\
        & \ddots & & \vdots
        \\
        \0 & & \II_{\tau}\otimes \G_T & \SSSS_T^T \V_T^T \otimes \II_{M}
    \end{bmatrix}
\end{align}
and
\begin{align}
    \z & \triangleq
    \begin{bmatrix}
        \x
        \\
        \h
    \end{bmatrix}, \quad
    \x \triangleq
    \begin{bmatrix}
        \x_1 \\ \vdots \\ \x_T
    \end{bmatrix}, \quad
    \w \triangleq
    \begin{bmatrix}
        \w_1 \\ \vdots \\ \w_T
    \end{bmatrix}.
\end{align}
Now, the original least-squares problem \eqref{LS} can be expressed on the equivalent vector form
\begin{align}
    \label{LSvec}
    \min_{\z} \ \| \y - \A \z \|^2.
\end{align}

\begin{figure}[t!]
	\centering
	\begin{tikzpicture}[scale=0.7]
		\draw[thick] (2,0) -- (10,0);
		\draw[thick] (2,-1) -- (10,-1);
		\draw[thick] (2,0) -- (2,-1);
		\draw[thick] (10,0) -- (10,-1);
		\draw[thick] (4,0) -- (4,-1);
		\draw[thick] (6,0) -- (6,-1);
		\draw[thick] (8,0) -- (8,-1);
		\draw[thick] (10,0) -- (10,-1);

		\draw [decorate,
		decoration = {calligraphic brace, raise=3pt, amplitude=3pt}, thick] (2.05,0) --  (3.95,0) node[pos=0.5,above=5pt,black,align=center, text width=100pt]{One UE-AP coherence interval ($\tau$ samples)};
		
		\node at (-0.1,-0.5) {UE-AP channels:};
		\node at (3,-0.5) {$\G_1$};
		\node at (5,-0.5) {$\G_2$};
		\node at (7,-0.5) {$\dots$};
		\node at (9,-0.5) {$\G_{T}$};
		
		\draw[thick] (2,-2) -- (10,-2);
		\draw[thick] (2,-1) -- (2,-2);
		\draw[thick] (10,-1) -- (10,-2);
		\draw[thick] (4,-1) -- (4,-2);
		\draw[thick] (6,-1) -- (6,-2);
		\draw[thick] (8,-1) -- (8,-2);
		\draw[thick] (10,-1) -- (10,-2);
		
		\node at (-0.1,-1.5) {AP-AP channel:};
		\node at (3,-1.5) {$\HH$};
		\node at (5,-1.5) {$\HH$};
		\node at (7,-1.5) {$\dots$};
		\node at (9,-1.5) {$\HH$};
		
            \draw [decorate, decoration = {calligraphic brace, raise=3pt, amplitude=3pt, mirror}, thick] (2.05,-2) -- (9.95,-2) node[pos=0.5,below=5pt,black,align=center, text width=120pt]{One AP-AP coherence interval ($T\tau$ samples) };

            \draw[thick, ->] (2,-4) -- (10,-4) node[pos=0.5, below=0pt]{Time};
		
	\end{tikzpicture}
	\caption{The UE-AP and AP-AP channels during a sequence of $T$ UE-AP coherence intervals, during which the AP-AP channel is constant.}
	\label{Fig:block}
\end{figure}

If $\A$ has full column rank, \eqref{LSvec}  has the unique solution
\begin{align}
    \label{lszhat}
    \hat\z \triangleq
    \begin{bmatrix}
        \hat\x
        \\
        \hat\h
    \end{bmatrix}
    = \argmin_{\z} \| \y - \A\z \|^2 = \left( \A^H \A \right)^{-1} \A^H \y.
\end{align}
However, in general $\A$ may be rank-deficient.\footnote{The intuition is that the unknown matrix $\HH$ is observed over the rank-deficient matrix $\V_t \SSSS_t$ for each $t$; see \eqref{Yt}. That is, to estimate all dimensions of $\HH$, we must observe it over sufficiently many independent realizations of $\{ \V_t \SSSS_t \}$. The coupling with $\{ \X_t \}$ makes this even more complicated.} In case $\A$ is rank-deficient, it is possible to reduce the dimension of the null space by inserting zeros into $\x$ (i.e., by forcing the UL UEs to be silent in certain samples), which effectively removes columns from $\A$. The conditions under which it is possible to remove sufficiently many columns to make $\A$ full-rank are non-obvious
in general (for arbitrary $T$). In Section \ref{sec:datadet2}, we will use this 
zero-insertion technique to obtain a uniquely solvable least-squares problem for rank-deficient $\A$ in the special case of $T=1$.

The rest of the paper is structured as follows. Section \ref{sec:analysis} is concerned with the existence of a unique solution to \eqref{LSvec}. In Section \ref{sec:datadet} we analyze the statistics of the unique solution \eqref{lszhat}, and develop remedies for the rank-deficient case. Finally, in Section \ref{sec:numerical}, we evaluate our proposed methods numerically.

\section{Uniqueness of the solution to \eqref{LSvec}}
\label{sec:analysis}

Finding conditions for uniqueness of the solution to \eqref{LSvec} is vital. If the least-squares problem lacks a unique solution, the estimation problem is unidentifiable and infinitely many sets of data matrices $\{\X_t\}$ and AP-AP channels $\HH$ explain the same received signal \eqref{Yt}, making it impossible to recover the data matrices even in the noise-free case. Therefore, uniqueness conditions determine when our joint data detection and AAI mitigation technique is meaningful to apply.

The solution \eqref{lszhat} is unique if $\A$ is a tall or square matrix with full rank. We assume that $\A$ is tall or square.
\begin{assumption}
    \label{assum:A}
    The coherence interval length $\tau$ is sufficiently large, such that
    the matrix $\A$ in \eqref{A} is tall or square, i.e.,
    \begin{align}
        \label{eq:tau}
        TM\tau \geq TK\tau + MN \iff \tau \geq \dfrac{MN}{T(M-K)}.
    \end{align}
    This is a reasonable assumption which will be fulfilled in practice; see Section \ref{sec:taudisc}.
\end{assumption}
However, $\A$ does not always have full rank. 
Next, we will analyze the rank of $\A$. First, we derive a necessary condition for full-rank $\A$ in the general case. Thereafter, we consider two special cases where we derive sufficient full-rank conditions or calculate the exact rank of $\A$.

\vspace{-3mm}
\subsection{Analyzing the Rank of $\A$}

Our further analysis will rely on one critical assumption.
\begin{assumption}
    \label{assum:VS}
    All elements of $\{ \G_t \}$, $\{ \V_t \}$ and $\{ \SSSS_t \}$ are mutually independent random variables with a positive probability density on the entire complex plane.
\end{assumption}
From Assumption \ref{assum:VS}, it follows that all $\{ \G_t \}$, $\{ \V_t \}$ and $\{ \SSSS_t \}$ have full rank with probability one, and that for the effective DL transmit signals $\{ \V_t \SSSS_t \}$ it holds: 
\begin{align}
    \label{rankPt}
    \rank(\V_t \SSSS_t) & = \min \{ \rank(\V_t),\ \rank(\SSSS_t) \} = J.
\end{align}

Next, we calculate a lower bound on the rank of $\A$, and derive a necessary condition on $T$ for full-rank $\A$.
\begin{theorem}
    \label{theorem:dimNA}
    The rank of $\A$ is (with probability one)\footnote{Hereafter, all statements on the
    rank of matrices are with probability one.}:
    \begin{align}
        \label{eq:rankA}
        \rank(\A) = TK\tau + R,
    \end{align}
    for a non-negative constant $R \leq MN$. Thus, the dimension of the null space $\NN(\A)$ of $\A$ is: 
    \begin{align}
        \label{eq:dimNA}
        \dim(\NN(\A)) & = TK\tau + MN -\rank(\A) = MN - R.
    \end{align}
    Furthermore, for the matrix $\A$ to have full rank, the following necessary condition must hold:
    \begin{align}
        \label{eq:boundT}
        T \geq \dfrac{MN}{(M-K)J}.
    \end{align}
\end{theorem}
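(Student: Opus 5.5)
The plan is to split the columns of $\A$ into the block-diagonal part $\B \triangleq \mathrm{blkdiag}(\II_{\tau}\otimes \G_1,\dots,\II_{\tau}\otimes \G_T)$ and the last block column $\CC \triangleq [\SSSS_1^T\V_1^T\otimes \II_M;\ \dots;\ \SSSS_T^T\V_T^T\otimes\II_M]$. By Assumption \ref{assum:VS}, each $\G_t$ has full column rank $K$ with probability one (since $K<M$). Hence $\II_\tau\otimes\G_t$ has rank $K\tau$, and $\B$ has full column rank $TK\tau$. Therefore $\rank(\A)\ge TK\tau$. Since $\A$ has $TK\tau+MN$ columns, $\rank(\A)\le TK\tau+MN$. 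Writing $R\triangleq \rank(\A)-TK\tau$ gives $0\le R\le MN$, and \eqref{eq:dimNA} follows from rank--nullity. For the "with probability one" qualifier on $R$ being a constant: $\rank(\A)$ equals its generic value off the zero set of a nonzero polynomial in the entries, so it is almost surely constant under Assumption \ref{assum:VS}.

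For the necessary condition \eqref{eq:boundT}, I would project out the data columns. Let $\PP_t^\perp\triangleq \II_\tau\otimes(\II_M-\G_t(\G_t^H\G_t)^{-1}\G_t^H)$ be the projector onto the orthogonal complement of the range of the $t$th diagonal block. Then
\[
\rank(\A)=TK\tau+\rank\!\begin{bmatrix}\PP_1^\perp(\SSSS_1^T\V_1^T\otimes\II_M)\\ \vdots\\ \PP_T^\perp(\SSSS_T^T\V_T^T\otimes\II_M)\end{bmatrix},
\]
so $R$ is the rank of this stacked matrix. By the mixed-product rule, each block equals $(\SSSS_t^T\V_t^T)\otimes \Q_t$, where $\Q_t\triangleq \II_M-\G_t(\G_t^H\G_t)^{-1}\G_t^H$ has rank $M-K$. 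Using $\rank(\X\otimes\Y)=\rank(\X)\rank(\Y)$ and \eqref{rankPt}, the $t$th block has rank $J(M-K)$. Subadditivity of rank over row blocks then gives $R\le TJ(M-K)$. Full column rank of $\A$ requires $R=MN$, hence $TJ(M-K)\ge MN$, which is \eqref{eq:boundT}.

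I expect no serious obstacle. The delicate points are justifying the projection identity and the genericity claim. For the projection identity, I would use the Schur-complement-type fact that $\rank[\B\ \CC]=\rank(\B)+\rank(\PP_\B^\perp\CC)$ when $\B$ has full column rank, with $\PP_\B^\perp=\mathrm{blkdiag}(\PP_t^\perp)$. For genericity, it suffices to note that $R$ is determined by minors that are polynomials in the random entries.
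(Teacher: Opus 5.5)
Your proof is correct and is essentially the paper's argument. You eliminate the $TK\tau$ data columns with the orthogonal projector $\II_\tau\otimes(\II_M-\G_t(\G_t^H\G_t)^{-1}\G_t^H)$, whereas the paper row-reduces each $\G_t$ to echelon form; either way what remains is the stack of blocks $\SSSS_t^T\V_t^T\otimes(\text{a left annihilator of }\G_t)$. You bound its rank by $TJ(M-K)$ via $\rank(\mathbf{M}_1\otimes\mathbf{M}_2)=\rank(\mathbf{M}_1)\rank(\mathbf{M}_2)$ and subadditivity, while the paper keeps $J$ spanning rows of $\PP_t$ to obtain $\check\RR$ with $T(M-K)J$ rows. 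Your polynomial-genericity remark, justifying that $R$ is almost surely a constant, is a small addition the paper leaves implicit.
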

\begin{proof}
    See Appendix \ref{appendix:dimNA}.
\end{proof}
Note that \eqref{eq:boundT} is a \emph{necessary but not sufficient} condition for full-rank $\A$. 
Below, we will consider a special case where a sufficient condition can also be established. We have to leave the problem of calculating the exact rank of $\A$ in the general case as an open question. 

Henceforth, we will limit our analysis to two cases:
\begin{itemize}
    \item \textbf{Case 1 ($M=N$, $K=J$):} Arbitrary $T$, but we restrict the number of UL/DL AP antennas and the number of UL/DL UEs, respectively, to be equal. We derive a lower bound on $T$ that ensures full-rank $\A$ (Theorem \ref{theorem:M=N=2K=2J} below).
    \item \textbf{Case 2 ($T=1$):} Arbitrary $M,N,K,J$, but the UE-AP channels vary as slowly as the AP-AP channel ($T=1$). We calculate the exact rank of $\A$ (Theorem \ref{theorem:T=1} below).
\end{itemize}

\subsection{Case 1 ($M=N$, $K=J$, arbitrary $T$)}

Calculating the exact rank of $\A$ is a challenging problem even in the case of $M=N$, $K=J$. We will prove that the necessary bound \eqref{eq:boundT} is also sufficient in a special case.
\begin{theorem}
    \label{theorem:M=N=2K=2J}
    When $M=N=2K=2J$, i.e. with twice as many AP antennas as UEs, $\A$ has full rank if and only if
    \begin{align}
        \label{eq:T}
        T \geq 4.
    \end{align}
\end{theorem}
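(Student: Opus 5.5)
The \emph{necessity} part follows directly from Theorem \ref{theorem:dimNA}. With $M=N=2K=2J$, condition \eqref{eq:boundT} reads $T \geq \frac{4K^2}{K\cdot K}=4$. It therefore remains to prove \emph{sufficiency}: that $\A$ has full column rank with probability one whenever $T\ge 4$. My plan is to first reduce the null space of $\A$ to a linear condition on $\HH$ alone. Then I will use a polynomial-genericity argument, which reduces the problem to exhibiting a single explicit instance where $\A$ has full rank.

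\textbf{Step 1 (reduction).} By \eqref{yt}, $\A\z=\0$ if and only if $\G_t\X_t+\HH\V_t\SSSS_t=\0$ for all $t\in\TT$.
\begin{itemize}
\item Let $\PP_t\in\C^{M\times(M-K)}$ have columns spanning the orthogonal complement of the range of $\G_t$. Multiplying by $\PP_t^H$ gives $\PP_t^H\HH\V_t\SSSS_t=\0$.
\item Since $\SSSS_t$ has full row rank $J$ (Assumption \ref{assum:VS}, $\tau\ge J$), this is equivalent to $\PP_t^H\HH\V_t=\0$.
\item Conversely, for any such $\HH$, the residual $-\HH\V_t\SSSS_t$ lies in the range of $\G_t$, so a solution $\X_t$ exists. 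It is unique because $\G_t$ has full column rank.
\end{itemize}
Hence
\begin{align*}
\dim(\NN(\A))=\dim\{\HH:\ \PP_t^H\HH\V_t=\0,\ t\in\TT\}.
\end{align*}
This also shows that the null space can only shrink as $T$ grows, since each new interval only adds constraints on $\HH$. It thus suffices to treat $T=4$. In that case there are exactly $4(M-K)J=4K^2=MN$ scalar constraints on the $MN$ unknowns.

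\textbf{Step 2 (genericity).} $\A$ has full column rank if and only if some maximal minor of $\A$ is nonzero. Each such minor is a polynomial in the real and imaginary parts of the entries of $\{\G_t\},\{\V_t\},\{\SSSS_t\}$. The zero set of a polynomial that is not identically zero has Lebesgue measure zero. Under Assumption \ref{assum:VS} the joint distribution is absolutely continuous, so it suffices to exhibit one choice of the matrices for which $\A$ has full rank.

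\textbf{Step 3 (explicit instance).} Partition $\HH$ into $K\times K$ blocks $\HH_{11},\HH_{12},\HH_{21},\HH_{22}$. Let $\mathbf{e}_1=[\II_K\ \0]^T$ and $\mathbf{e}_2=[\0\ \II_K]^T$. Choose the four intervals as follows:
\begin{itemize}
\item $t=1$: $\G_1=\mathbf{e}_2$ (so $\PP_1=\mathbf{e}_1$) and $\V_1=\mathbf{e}_1$, which forces $\HH_{11}=\0$;
\item $t=2$: $\G_2=\mathbf{e}_1$ (so $\PP_2=\mathbf{e}_2$) and $\V_2=\mathbf{e}_2$, which forces $\HH_{22}=\0$;
\item $t=3$: $\G_3=\mathbf{e}_2$ and $\V_3=\mathbf{e}_2$, which forces $\HH_{12}=\0$;
\item $t=4$: $\G_4=\mathbf{e}_1$ and $\V_4=\mathbf{e}_1$, which forces $\HH_{21}=\0$.
\end{itemize}
Take any full-row-rank $\SSSS_t$, e.g.\ $[\II_J\ \0]$. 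By Step 1, the null space is then trivial, so $\A$ has full column rank at this point. By Step 2 it has full rank with probability one, for $T=4$ and hence for all $T\ge4$.

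The main point requiring care is Step 2. The projector $\PP_t$ depends non-polynomially on $\G_t$, so the genericity argument must be run on the minors of $\A$ itself rather than on the reduced constraint map in $\HH$. Step 1 is then used only to verify full rank at the explicit instance. The chosen instance consists of legitimate (non-random) matrices, so it is a valid evaluation point of the polynomial.
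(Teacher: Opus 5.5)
Your proof is correct, but it certifies the key non-degeneracy differently from the paper. The paper row-reduces $\A$ to the square matrix $\mybar\RR\in\C^{4K^2\times 4K^2}$ with blocks $\mybar\PP_t\otimes\mybar\Q_t$ and $\mybar\PP_t\otimes\II_K$. There, $\mybar\PP_t$ consists of $J$ rows of $\SSSS_t^T\V_t^T$, and the rows of $[\mybar\Q_t\ \II_K]$ span the left null space of $\G_t$, so $\mybar\RR$ encodes the same linear conditions on $\HH$ as your Step~1.

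The paper then proves $\det\mybar\RR\not\equiv 0$ structurally. It isolates the term $f(\PP,\Q)\det([\mybar\PP_3;\mybar\PP_4]\otimes\II_K)$ coming from the diagonal blocks of the top-left quarter, and argues that no monomial of the remainder can cancel it. Finally it transfers this to the original randomness through $\PP=\SSSS\V$ and $\Q=-\G_2\G_1^{-1}$. It treats this as a polynomial in $(\V,\G_2)$ for fixed $\SSSS$ and $\G_1$, since the dependence on $\G_1$ is rational.

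You instead run the genericity argument on the minors of $\A$ itself, which are polynomial in all the original entries; you correctly note that this is why the projector must not enter that step. You then certify non-vanishing with a single witness in which each interval annihilates one $K\times K$ block of $\HH$. This removes both the monomial-cancellation bookkeeping and the rational change of variables, and the witness can be checked by inspection.

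Your witness idea also generalizes immediately. Covering the index grid of $\HH$ with $(M-K)\times J$ coordinate blocks shows that $\A$ has full rank with probability one whenever $T\ge\lceil M/(M-K)\rceil\lceil N/J\rceil$ (for $\tau\ge J$). Hence \eqref{eq:boundT} is tight whenever $(M-K)\mid M$ and $J\mid N$.

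It does not by itself settle Conjecture~\ref{conj:M=N,K=J}, because coordinate witnesses can fall short of \eqref{eq:conjboundT}. For $M=N=3$, $K=J=1$, they need $T\ge 6$ rather than $5$. In such cases one would need non-coordinate witnesses, or a structural handle on $\det\mybar\RR$ of the kind the paper proposes to extend.
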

\begin{proof}
    See Appendix \ref{appendix:SC_MN_2KJ}.
\end{proof}
Note that the condition \eqref{eq:T} is consistent with \eqref{eq:boundT} by inserting $M = N = 2K = 2J$. We believe that the technique used in the proof of Theorem \ref{theorem:M=N=2K=2J} can be extended to prove that the necessary condition \eqref{eq:boundT} is also sufficient for general $M=N$, $K=J$. Extensive numerical simulations suggest that it is, but this remains an unproven conjecture.
\begin{conjecture}
    \label{conj:M=N,K=J}
    With $M=N$ and $K=J$, the matrix $\A$ has full rank if and only if
    \begin{align}
        \label{eq:conjboundT}
        T \geq \dfrac{M^2}{(M-K)K}.
    \end{align}
\end{conjecture}
The interpretation of Conjecture \ref{conj:M=N,K=J} is that $\A$ will eventually become full-rank for sufficiently large $T$.

\subsection{Case 2 ($T=1$, arbitrary $M,N,K,J$)}

With $T=1$, $\A$ is given by
\begin{align}
    \label{A_T1}
    \A = 
    \begin{bmatrix}
        \II_{\tau} \otimes \G_1 & \SSSS_1^T \V_1^T \otimes \II_M
    \end{bmatrix}
     \in \C^{M\tau \times (K\tau + MN)}.
\end{align}
We derive a closed-form expression for the rank of $\A$.
\begin{theorem}
    \label{theorem:T=1}
    With $T=1$, the rank of $\A$ in \eqref{A_T1} is:
    \begin{align}
        \rank(\A) = K\tau + (M-K)J.
    \end{align}
    Hence, since $\A$ is tall,
    \begin{align}
        \label{dimNA_case1}
        \nonumber
        \dim(\NN(\A)) & = K\tau + MN - \rank(\A)
        \\
        & = M(N - J) + KJ.
    \end{align}
\end{theorem}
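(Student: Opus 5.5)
We compute the null space of $\A$ in \eqref{A_T1} directly; rank–nullity then gives the rank. Write $\P \triangleq \V_1\SSSS_1 \in \C^{N\times\tau}$, which has rank $J$ by \eqref{rankPt}. A vector $\z = [\x^T\ \h^T]^T$ lies in $\NN(\A)$ exactly when $\G_1 \X + \HH \P = \0$. Here $\X \in \C^{K\times\tau}$ and $\HH \in \C^{M\times N}$ are the matrices with $\vecc(\X)=\x$ and $\vecc(\HH)=\h$. This follows from \eqref{veckron}.

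\textbf{Step 1: parametrize $\HH$.} Let $\Pi \triangleq \II_M - \G_1(\G_1^H\G_1)^{-1}\G_1^H$ be the orthogonal projector onto the complement of the column space of $\G_1$. Since $\G_1$ has full column rank $K$ with probability one, $\Pi$ has rank $M-K$. The equation $\G_1\X = -\HH\P$ has a solution $\X$ if and only if every column of $\HH\P$ lies in the column space of $\G_1$, i.e., if and only if $\Pi\HH\P = \0$. When a solution exists it is unique, because $\G_1$ is injective. Hence the projection $(\x,\h)\mapsto \h$ is an injective linear map from $\NN(\A)$ onto
\begin{align}
    \mathcal{S} \triangleq \{ \HH \in \C^{M\times N} : \Pi\HH\P = \0 \},
\end{align}
and so $\dim(\NN(\A)) = \dim(\mathcal{S})$.

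\textbf{Step 2: count dimensions.} The map $\HH \mapsto \Pi\HH\P$ has vectorized form $\P^T\otimes\Pi$, whose rank is $\rank(\P)\,\rank(\Pi) = J(M-K)$. Its null space therefore has dimension
\begin{align}
    MN - (M-K)J = M(N-J) + KJ,
\end{align}
which is \eqref{dimNA_case1}. Rank–nullity applied to $\A$, which has $K\tau+MN$ columns, then gives
\begin{align}
    \rank(\A) = K\tau + MN - \dim(\NN(\A)) = K\tau + (M-K)J.
\end{align}
This is consistent with Theorem \ref{theorem:dimNA}, with $R = (M-K)J$. Assumption \ref{assum:A} is not needed for the rank formula itself. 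It is needed only so that "tall" makes the full-rank question meaningful, and the formula shows $\A$ is rank-deficient whenever $N > J$.

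\textbf{Where the difficulty lies.} The only delicate point is the equivalence in Step 1: we must confirm that the existence of $\X$ is exactly the condition $\Pi\HH\P=\0$, and that the null-space parametrization is bijective. Both follow from $\G_1$ having full column rank, so no genericity beyond Assumption \ref{assum:VS} enters. The rank identity for Kronecker products is standard. We therefore expect no real obstacle, and the argument is considerably simpler than the $T>1$ case.
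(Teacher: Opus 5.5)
Your proof is correct, and it takes a route dual to the paper's. The paper works on the row side. It row-reduces each $M$-row block $[\G_1\ \ \p_1^{(i)}\otimes\II_M]$ to echelon form, which exposes $K\tau$ pivots and leaves $\RR=\SSSS_1^T\V_1^T\otimes\Q_1$. Here $\Q_1=[\mybar\Q_1\ \ \II_{M-K}]$ is the bottom part of the elimination matrix, so $\Q_1\G_1=\0$ and $\rank(\Q_1)=M-K$, and the Kronecker rank identity gives $\rank(\RR)=J(M-K)$. You work on the kernel side instead, eliminating $\X$ by parametrizing $\NN(\A)$ through $\HH$. Your projector $\Pi$ plays the role of $\Q_1$: both annihilate $\G_1$ and have rank $M-K$, hence the same row space, so $(\V_1\SSSS_1)^T\otimes\Pi$ and $\RR$ have the same rank and the counts coincide.

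Your version has several advantages. It is self-contained and basis-free, since it needs only that $\G_1$ has full column rank. It avoids the echelon bookkeeping (pivots in the first $K$ rows, the shape of $\check\Q_1$). It computes $\dim(\NN(\A))$ directly, which makes clear that tallness plays no role in \eqref{dimNA_case1}. It also extends verbatim to general $T$, with $R$ the rank of the stacked $(\V_t\SSSS_t)^T\otimes\Pi_t$. The paper's version buys the explicit annihilator $[\mybar\Q_t\ \ \II_{M-K}]$, whose identity block and closed form $\mybar\Q_t=-\G_{t,2}\G_{t,1}^{-1}$ drive the determinant argument in the proof of Theorem~\ref{theorem:M=N=2K=2J}; $\Pi_t$ has no such convenient structure.

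One small correction to your side remark. The formula relies on $\rank(\V_1\SSSS_1)=J$, which needs $\tau\ge J$; for $\tau<J$ one instead gets $\rank(\A)=M\tau$. Assumption~\ref{assum:A} supplies this, since for $T=1$ it forces $\tau\ge MN/(M-K)>N>J$. So that assumption is not entirely idle, even though rank--nullity itself needs no tallness.
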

\begin{proof}
    See Appendix \ref{appendix:SC_T1}.
\end{proof}
Theorem \ref{theorem:T=1} states that $\A$ \emph{never} has full rank when $T=1$; hence the solution to \eqref{LSvec} is not unique. In Section \ref{sec:datadet2}, we modify \eqref{LSvec} to make the problem uniquely solvable for $T=1$.

\section{Joint Data Detection and AAI Mitigation}
\label{sec:datadet}

In this section, we present methods for joint UL data detection and AAI mitigation. First, we calculate the covariance matrix of $\hat\x$ in \eqref{lszhat} assuming a full-rank matrix $\A$. Then, we propose a method to obtain a unique solution for rank-deficient $\A$. We show that our proposed methods are independent of the AAI power. Finally, we present two baseline methods.

Throughout, we assume that the elements of all $\W_t$ are independent and identically distributed (i.i.d.) $\CN(0,1)$, and that the AP-AP channel $\HH$ has full rank.

\vspace{-3mm}
\subsection{Proposed Method for Full-Rank $\A$}

When $\A$ has full rank, the covariance matrices of the (unique) data symbol estimates $\{\hat{\x}_t\}$ are as follows.
\begin{proposition}
    \label{prop:covxt}
    The covariance matrix $\Cov(\hat\x_t)$ of the UL data symbol estimate $\hat\x_t$ in coherence interval $t$ is given by the $t$th $K\tau \times K\tau$ diagonal block of the full covariance matrix $( \A^H \A )^{-1}$ of $\hat\z$. $\Cov(\hat\x_t)$ is provided in \eqref{Covxt}.
    \begin{figure*}[t!]
	\begin{align}
		\label{Covxt}
	    \Cov(\hat\x_t) & = \II_{\tau}\otimes (\G_t^H \G_t)^{-1} + (\SSSS_t^T \V_t^T \otimes (\G_t^H \G_t)^{-1} \G_t^H) \E (\V_t^* \SSSS_t^* \otimes \G_t (\G_t^H \G_t)^{-1}),
		\\
		\label{E}
		\E & \triangleq \Cov(\hat\h) = \left(\sum_{t \in \TT} \V_t^* \SSSS_t^* \SSSS_t^T \V_t^T \otimes (\II_M - \G_t (\G_t^H \G_t)^{-1} \G_t^H) \right)^{-1}.
	\end{align}
	\hrulefill
    \end{figure*}
\end{proposition}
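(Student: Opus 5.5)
Since $\A$ has full column rank and the noise $\w$ has i.i.d.\ $\CN(0,1)$ entries, I will first note that $\hat\z = (\A^H\A)^{-1}\A^H\y = \z + (\A^H\A)^{-1}\A^H\w$. Hence $\Cov(\hat\z) = (\A^H\A)^{-1}\A^H \EEE[\w\w^H]\A(\A^H\A)^{-1} = (\A^H\A)^{-1}$. Because $\hat\x_t$ is the $t$th block of $K\tau$ entries of $\hat\z$, its covariance is the $t$th $K\tau\times K\tau$ diagonal block of $(\A^H\A)^{-1}$, and $\E=\Cov(\hat\h)$ is the bottom-right $MN\times MN$ block. This settles the first claim. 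The remaining work is to compute these blocks explicitly by block (Schur-complement) inversion of $\A^H\A$.

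Next I write $\A^H\A$ in $2\times 2$ block form. Let $\B_t \triangleq \II_\tau\otimes\G_t$ and $\CC_t \triangleq \SSSS_t^T\V_t^T\otimes\II_M$. The top-left block is block-diagonal, $\D=\mathrm{blkdiag}(\B_t^H\B_t)$ with $\B_t^H\B_t = \II_\tau\otimes\G_t^H\G_t$; this is invertible since $\G_t$ has full column rank by Assumption \ref{assum:VS} and $K<M$. The off-diagonal block has $t$th component $\B_t^H\CC_t$. The bottom-right block is $\sum_t \CC_t^H\CC_t$.

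The Schur complement of $\D$ is
\begin{align*}
    \sum_t \CC_t^H\left(\II_{M\tau} - \B_t(\B_t^H\B_t)^{-1}\B_t^H\right)\CC_t .
\end{align*}
Using the mixed-product rule, $\B_t(\B_t^H\B_t)^{-1}\B_t^H = \II_\tau\otimes \PP_t$ with $\PP_t \triangleq \G_t(\G_t^H\G_t)^{-1}\G_t^H$. Then
\begin{align*}
    \CC_t^H(\II_\tau\otimes(\II_M-\PP_t))\CC_t = \V_t^*\SSSS_t^*\SSSS_t^T\V_t^T\otimes(\II_M-\PP_t).
\end{align*}
Inverting the sum gives \eqref{E}. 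The Schur complement is invertible because $\A$ having full column rank makes $\A^H\A$ positive definite, so all its Schur complements are positive definite too.

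For the $\X$ blocks, the block-inverse formula gives the top-left block of $(\A^H\A)^{-1}$ as $\D^{-1} + \D^{-1}\B^H\CC\,\E\,\CC^H\B\D^{-1}$. Here $\B^H\CC$ stacks the $\B_t^H\CC_t$. Its $t$th diagonal block is $(\B_t^H\B_t)^{-1} + (\B_t^H\B_t)^{-1}\B_t^H\CC_t\,\E\,\CC_t^H\B_t(\B_t^H\B_t)^{-1}$. Simplifying with the mixed-product rule,
\begin{align*}
    (\B_t^H\B_t)^{-1}\B_t^H\CC_t &= (\II_\tau\otimes(\G_t^H\G_t)^{-1}\G_t^H)(\SSSS_t^T\V_t^T\otimes\II_M)\\
    &= \SSSS_t^T\V_t^T\otimes(\G_t^H\G_t)^{-1}\G_t^H,
\end{align*}
and its conjugate transpose is $\V_t^*\SSSS_t^*\otimes\G_t(\G_t^H\G_t)^{-1}$. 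This yields \eqref{Covxt}.

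The main care point is the Kronecker bookkeeping: the conjugate transpose of $\SSSS_t^T\V_t^T$ is $\V_t^*\SSSS_t^*$, and the identity factors of sizes $\tau$ and $M$ must be placed correctly. Beyond that, the only subtlety is justifying invertibility of $\D$ and of the Schur complement. Both follow from the full rank of $\G_t$ and of $\A$.
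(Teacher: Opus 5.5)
Your proof is correct and follows essentially the same route as the paper: you partition $\A^H\A$ with a block-diagonal top-left block, apply the block-inversion (Schur complement) formula, and simplify with the Kronecker mixed-product rule to obtain $\E$ and the $t$th diagonal block. You additionally justify $\Cov(\hat\z)=(\A^H\A)^{-1}$ and the invertibility of the Schur complement, which the paper leaves implicit; note only that your labels $\B$, $\CC$, $\D$, $\PP_t$ mean different things than in the paper's proof.
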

\begin{proof}
	See Appendix \ref{appendix:covxt}.
\end{proof}
Note that the covariance matrix \eqref{Covxt} contains two terms. The first term arises due to the receiver noise at the UL AP. The second term comes from the imperfect AAI mitigation and depends on the covariance matrix of the AP-AP channel estimate in \eqref{E}. This is intuitive; a good AP-AP channel estimate has a small covariance matrix, which reduces the uncertainty in the data estimates. 

\vspace{-3mm}
\subsection{Proposed Method for Rank-Deficient $\A$ and $T=1$}
\label{sec:datadet2}

As alluded to in Section \ref{sec:problem}, we can reduce the dimension of the nullspace of $\A$ by silencing a subset of the UL data symbols. Here, we use this idea to obtain a uniquely solvable problem for $T=1$.\footnote{This method can be applied for any $T > 1$ by disregarding that $\HH$ is slowly time-varying. Better methods can possibly be developed for rank-deficient $\A$ with $T > 1$, but we leave this as an open problem.} 

For simplicity, we drop the index $t$ in the following. Before silencing UL data samples we can do another trick to reduce the dimension of the nullspace of $\A$ in \eqref{A_T1}. Specifically, we define the \emph{effective} AP-AP channel $\F \triangleq \HH \V \in \C^{M \times J}$ and treat this matrix as unknown instead of $\HH$. This effectively reduces the number of unknown channel parameters from $MN$ to $MJ$. To see how this affects the nullspace of $\A$, we write the received signal as
\begin{align}
    \label{YF}
    \Y = \G\X + \HH\V\SSSS + \W = \G\X + \F \SSSS + \W,
\end{align}
and the corresponding least-squares problem becomes
\begin{align}
    \label{LS_T1}
    \min_{ \X, \F } \ \| \Y - \G \X - \F \SSSS \|^2.
\end{align}
Define $\f \triangleq \vecc(\F)$ and rewrite \eqref{YF} on the form
\begin{align}
    \y & = \A' \z' + \w,
\end{align}
where $\A' \in \C^{M\tau \times (K\tau + MJ)}$ and $\z' \in \C^{(K\tau + MJ) \times 1}$ are
\begin{align}
    \label{Az}
    \A' =
    \begin{bmatrix}
        \II_{\tau} \otimes \G & \SSSS^T \otimes \II_M
    \end{bmatrix}
    ,\
    \z' =
    \begin{bmatrix}
        \x
        \\
        \f
    \end{bmatrix}.
\end{align}
By following the same idea as in the proof of Theorem \ref{theorem:T=1} it follows immediately that $\dim(\NN(\A')) = KJ$. Thus, by incorporating the precoding matrix $\V$ into the AP-AP channel and treating $\F$ as unknown, we have reduced the dimension of the nullspace of the regression matrix from $M(N-J) + KJ$ to $KJ$.

We will show that \eqref{LS_T1} becomes uniquely solvable if the $K$ UL UEs are silent in $J$ samples of each coherence interval. Without loss of generality, choose the first $J$ samples. Then,
\begin{align}
    \X =
    \begin{bmatrix}
        \0_{K \times J} & \mybar\X
    \end{bmatrix},
\end{align}
where $\mybar\X \in \C^{K \times (\tau - J)}$ contains the data transmitted in the remaining $\tau - J$ samples. Then, the UL AP receives the signal
\begin{align}
    \label{Y1_T=1}
    \Y = \G
    \begin{bmatrix}
        \0_{K \times J} & \mybar\X
    \end{bmatrix}
    + \F \SSSS  + \W.
\end{align}
Define $\mybar\x \triangleq \vecc(\mybar\X)$ and apply \eqref{veckron} to obtain the equivalent vector form of \eqref{Y1_T=1}:
\begin{align}
    \label{eq:y_T=1}
    \y & =
    \begin{bmatrix}
        \II_{\tau}\otimes\G & \SSSS^T  \otimes \II_{M}
    \end{bmatrix} 
    \begin{bmatrix}
        \0_{KJ \times 1} \\ \mybar\x \\ \f
    \end{bmatrix}
    + \w
    = \mybar\A \mybar\z + \w,
\end{align}
where $\mybar\A \in \C^{M\tau \times (K(\tau-J) + MJ)}$ and $\mybar\z \in \C^{(K(\tau-J) + MJ) \times 1}$ are given by
\begin{align}
    \mybar\A = 
    \begin{bmatrix}
        \0_{MJ \times K(\tau - J)} & |
        \\
        & \SSSS^T \otimes \II_{M}
        \\
        \II_{\tau-J} \otimes \G & |
    \end{bmatrix}
    ,\
    \mybar\z =
    \begin{bmatrix}
        \mybar\x \\ \f
    \end{bmatrix}.
\end{align}
Note that $\mybar\A$ is obtained by removing the first $KJ$ columns from $\A'$ in \eqref{Az}. Now, a unique least-squares solution exists. The regression matrix $\mybar\A$ of the modified problem \eqref{eq:y_T=1} has full column rank.
\begin{theorem}
    \label{theorem:dimNAbar}
    The matrix $\mybar\A$ has full rank:
    \begin{align}
        \label{eq:rankAbar}
        \rank(\mybar\A) = K(\tau - J) + MJ.
    \end{align}
\end{theorem}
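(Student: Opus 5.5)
We prove that $\mybar\A$ has trivial nullspace directly, working in matrix form rather than with the Kronecker structure. Suppose $\mybar\z = [\mybar\x^T\ \f^T]^T$ satisfies $\mybar\A\mybar\z = \0$, and let $\mybar\X$ and $\F$ be the corresponding matrices. By \eqref{veckron}, this is equivalent to the matrix equation
\begin{align*}
    \G \begin{bmatrix} \0_{K\times J} & \mybar\X \end{bmatrix} + \F\SSSS = \0 .
\end{align*}
We must show that this forces $\mybar\X = \0$ and $\F = \0$. Because $\mybar\A$ is obtained by removing columns, and $\tau \ge J$ is needed for it to make sense, we also note that $\mybar\A$ is tall under Assumption \ref{assum:A}. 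Hence full column rank is the same as a trivial nullspace.

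\textbf{Step 1.} Partition $\SSSS = [\SSSS_a\ \SSSS_b]$, where $\SSSS_a \in \C^{J\times J}$ collects the first $J$ columns (the silenced samples). The equation splits into two parts:
\begin{align*}
    \F\SSSS_a &= \0, &
    \G\mybar\X + \F\SSSS_b &= \0 .
\end{align*}
By Assumption \ref{assum:VS}, the entries of $\SSSS$ are independent with densities. Hence the square block $\SSSS_a$ is invertible with probability one, since $\det(\SSSS_a)$ is a nonzero polynomial in the entries and vanishes only on a measure-zero set. The first equation then gives $\F = \0$.

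\textbf{Step 2.} Substituting $\F = \0$ into the second equation gives $\G\mybar\X = \0$. Since $\G \in \C^{M\times K}$ with $K<M$ has full column rank $K$ with probability one (again by Assumption \ref{assum:VS}), we get $\mybar\X = \0$. Therefore $\NN(\mybar\A) = \{\0\}$, and
\begin{align*}
    \rank(\mybar\A) = K(\tau-J) + MJ .
\end{align*}

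If we wished to tie this to the remark that $\dim(\NN(\A')) = KJ$, we could observe that the nullspace of $\A'$ is parametrized by $\F = -\G\mathbf{B}\SSSS_a^{-1}$-type solutions, i.e. pairs with $\X = -\mathbf{C}\SSSS$ and $\F = \G\mathbf{C}$ for $\mathbf{C}\in\C^{K\times J}$. Zeroing the first $J$ columns of $\X$ forces $\mathbf{C}\SSSS_a = \0$, hence $\mathbf{C} = \0$. This is an alternative route, but the direct argument above is simpler.

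The only real obstacle is justifying the probability-one invertibility of $\SSSS_a$ (and full rank of $\G$). We handle it by the standard fact that the zero set of a nonzero polynomial has Lebesgue measure zero, combined with the positive-density assumption.
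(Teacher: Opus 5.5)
Your proof is correct and is essentially the paper's argument in matrix form. The paper uses that the first $MJ$ rows of $\mybar\A$ contain only the block $\SSSS_{1:J}^T \otimes \II_M$, which is invertible (your $\F\SSSS_a = \0 \Rightarrow \F = \0$ step), and then the full column rank of $\II_{\tau-J}\otimes\G$ (your $\G\mybar\X = \0 \Rightarrow \mybar\X = \0$ step). Your explicit appeal to the probability-one invertibility of $\SSSS_a$ is, if anything, more precise than the paper's justification via $\rank(\SSSS^T\otimes\II_M) = MJ$, since that fact alone does not make the top block invertible.
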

\begin{proof}
    See Appendix \ref{appendix:dimNAbar}.
\end{proof}
We present two algorithms to estimate $ \mybar\x$ from $\y$ in \eqref{eq:y_T=1}. 

\subsubsection{Algorithm 1 (Joint Estimation)}
\label{defA_T=1}

Here, in the same spirit as the joint estimation in \eqref{lszhat},
we directly obtain $\hat{\mybar{\x}}$ by solving the joint least-squares problem
\begin{align}
    \label{zBarHat}
    \hat{\mybar\z} \triangleq
    \begin{bmatrix}
        \hat{\mybar{\x}} \\ \hat\f
    \end{bmatrix}
    = \argmin_{\mybar\z} \| \y - \mybar\A \mybar\z \|^2 = \left( \mybar\A^H \mybar\A \right)^{-1} \mybar\A^H \y.
\end{align}
We calculate the covariance matrix of the data estimate $\hat{\mybar{\x}}$.
\begin{proposition}
    The covariance matrix $\Cov(\hat{\mybar\x})$ is given by the upper-left $K(\tau-J) \times K(\tau-J)$ block of the full covariance matrix $( \mybar\A^H \mybar\A )^{-1}$ of $\hat{\mybar\z}$. $\Cov(\hat{\mybar\x})$ is provided in \eqref{Covxtbar_T=1}, where $\mybar\SSSS$ is obtained by removing the first $J$ columns from $\SSSS$.
    \begin{figure*}[t!]
	\begin{align}
		\label{Covxtbar_T=1}
	    \Cov(\hat{\mybar\x}) & = \II_{\tau-J}\otimes (\G^H \G)^{-1} + (\mybar\SSSS^T \otimes (\G^H \G)^{-1} \G^H) \mybar\E (\mybar\SSSS^* \otimes \G (\G^H \G)^{-1}),
		\\
		\label{Ebar_T=1}
		\mybar\E & \triangleq \Cov(\hat\f) = \left( (\SSSS^* \SSSS^T \otimes \II_M) - (\mybar\SSSS^* \mybar\SSSS^T \otimes \G (\G^H \G)^{-1} \G^H) \right)^{-1}.
	\end{align}
	\hrulefill
    \end{figure*}
\end{proposition}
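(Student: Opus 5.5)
Since $\mybar\A$ has full column rank (Theorem \ref{theorem:dimNAbar}) and the noise is white with $\Cov(\w)=\II$, the estimate $\hat{\mybar\z}=(\mybar\A^H\mybar\A)^{-1}\mybar\A^H\y$ is unbiased with $\Cov(\hat{\mybar\z})=(\mybar\A^H\mybar\A)^{-1}\mybar\A^H\II\mybar\A(\mybar\A^H\mybar\A)^{-1}=(\mybar\A^H\mybar\A)^{-1}$. Hence $\Cov(\hat{\mybar\x})$ is the upper-left $K(\tau-J)\times K(\tau-J)$ block of this inverse. The remaining work is to compute that block in closed form by blockwise inversion, exactly as in the proof of Proposition \ref{prop:covxt} (Appendix \ref{appendix:covxt}) with $T=1$.

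First I partition $\mybar\A=[\B_1\ \B_2]$. Here $\B_1$ is the data part: $\II_{\tau-J}\otimes\G$ stacked below $MJ$ zero rows. $\B_2=\SSSS^T\otimes\II_M$ is the channel part. The Gram blocks are then computed directly with the mixed-product rule for Kronecker products:
\begin{align*}
\B_1^H\B_1&=\II_{\tau-J}\otimes\G^H\G, \\
\B_2^H\B_2&=\SSSS^*\SSSS^T\otimes\II_M, \\
\B_1^H\B_2&=\mybar\SSSS^T\otimes\G^H.
\end{align*}
In the last product, the zero rows of $\B_1$ kill the first $J$ rows of $\SSSS^T$, so only the last $\tau-J$ rows survive, i.e.\ $\mybar\SSSS^T$.

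Next I apply the block inverse formula with the Schur complement taken with respect to the invertible $(1,1)$ block. Writing $\PP\triangleq\G(\G^H\G)^{-1}\G^H$, the Schur complement is
\begin{align*}
\B_2^H\B_2-\B_2^H\B_1(\B_1^H\B_1)^{-1}\B_1^H\B_2=\SSSS^*\SSSS^T\otimes\II_M-\mybar\SSSS^*\mybar\SSSS^T\otimes\PP,
\end{align*}
where the middle product simplifies by the mixed-product rule. Its inverse is by definition $\mybar\E$ in \eqref{Ebar_T=1}, and it is also the $(2,2)$ block of the inverse, i.e.\ $\Cov(\hat\f)$.

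The $(1,1)$ block of the inverse is
\begin{align*}
(\B_1^H\B_1)^{-1}+(\B_1^H\B_1)^{-1}\B_1^H\B_2\,\mybar\E\,\B_2^H\B_1(\B_1^H\B_1)^{-1}.
\end{align*}
Substituting the blocks above gives $\II\otimes(\G^H\G)^{-1}$ for the first term. The second term becomes $(\mybar\SSSS^T\otimes(\G^H\G)^{-1}\G^H)\,\mybar\E\,(\mybar\SSSS^*\otimes\G(\G^H\G)^{-1})$, which is exactly \eqref{Covxtbar_T=1}.

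The only real subtlety is that the Schur complement must be invertible. This follows from Theorem \ref{theorem:dimNAbar}: $\mybar\A^H\mybar\A$ is positive definite, and a positive definite matrix has invertible principal blocks and an invertible Schur complement. Beyond that, the main care point is bookkeeping the zero rows correctly, so that $\SSSS$ is replaced by $\mybar\SSSS$ in the cross terms but not in $\B_2^H\B_2$.
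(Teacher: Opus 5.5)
Your proposal is correct and takes the same approach as the paper, which omits this proof by pointing to the Schur-complement block inversion of Appendix~\ref{appendix:covxt} specialized to $T=1$. You also supply the two details that adaptation needs: the zero rows of the data block replace $\SSSS$ by $\mybar\SSSS$ in the cross Gram term but not in $\SSSS^*\SSSS^T\otimes\II_M$, and the Schur complement is invertible by Theorem~\ref{theorem:dimNAbar}.
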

\begin{proof}
    The derivation follows the same steps as the proof of Proposition \ref{prop:covxt} and is thus omitted.
\end{proof}

\subsubsection{Algorithm 2 (Channel Estimation Followed by Data Detection)}
In Algorithm 1, the first $J$ samples are not explicitly used. Now, we use these samples to estimate the effective AP-AP channel $\F$, before subtracting the interference from the remaining samples and detecting the data. The signal received in the first $J$ samples, i.e., the first $J$ columns of $\Y$ in \eqref{Y1_T=1} is
\begin{align}
    \label{Y_1:J}
    \nonumber
    \Y_{1:J} & \triangleq \G
    \begin{bmatrix}
        \0_{K \times J} & \mybar\X
    \end{bmatrix}_{1:J}
    + \F \SSSS_{1:J} + \W_{1:J}
    \\
    & = \F \SSSS_{1:J} + \W_{1:J},
\end{align}
where $(\cdot)_{1:J}$ denotes the first $J$ columns of a matrix. The least-squares estimate of $\F$ from \eqref{Y_1:J} is:
\begin{align}
    \label{eq:hatF}
    \hat\F = \Y_{1:J} \SSSS_{1:J}^H (\SSSS_{1:J} \SSSS_{1:J}^H)^{-1}.
\end{align}
Now partition $\Y$ according to ${\Y = \begin{bmatrix} \Y_{1:J} & \mybar\Y \end{bmatrix}}$. Consider $\mybar\Y$, and subtract the known interference $\hat\F \mybar\SSSS$:
\begin{align}
    \label{Ybreve}
    \mybar\Y - \hat\F \mybar\SSSS & = \G \mybar\X + (\F - \hat\F) \mybar\SSSS + \mybar\W.
\end{align}
The least-squares estimate of $\mybar\X$ from \eqref{Ybreve} is
\begin{align}
    \label{XBarHat_C2}
    \hat{{\mybar\X}} = (\G^H \G)^{-1} \G^H ( \mybar\Y - \hat\F \mybar\SSSS ),
\end{align}
and on vector form,  $\hat{\mybar\x} = \vecc(\hat{{\mybar\X}})$. 

Despite the seemingly different nature of the two algorithms (joint estimation respectively channel estimation followed by data detection), we show that they have the same solution.
\begin{theorem}
    \label{theorem:eqMethods}
    The solutions $\hat{\mybar{\x}}$ of the two algorithms, obtained with joint estimation in \eqref{zBarHat} and with channel estimation followed by interference subtraction in \eqref{XBarHat_C2}, are equal.
\end{theorem}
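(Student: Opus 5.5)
The plan is to work directly with the matrix form of the joint least-squares problem
\begin{align*}
\min_{\mybar\X, \F} \ \| \Y_{1:J} - \F \SSSS_{1:J} \|^2 + \| \mybar\Y - \G \mybar\X - \F \mybar\SSSS \|^2 ,
\end{align*}
which is equivalent to \eqref{zBarHat}. The objective separates into a term from the first $J$ (silent) samples and a term from the remaining $\tau-J$ samples. By Theorem \ref{theorem:dimNAbar} the minimizer is unique, so it suffices to exhibit a stationary point and show that it coincides with the output of Algorithm 2.

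First, I fix $\F$ and minimize over $\mybar\X$. Only the second term depends on $\mybar\X$, and since $\G$ has full column rank, the minimizer is $\mybar\X(\F) = (\G^H\G)^{-1}\G^H(\mybar\Y - \F\mybar\SSSS)$. Substituting this back gives the concentrated objective
\begin{align*}
\| \Y_{1:J} - \F \SSSS_{1:J} \|^2 + \| \PP^\perp (\mybar\Y - \F \mybar\SSSS) \|^2 ,
\end{align*}
where $\PP^\perp \triangleq \II_M - \G(\G^H\G)^{-1}\G^H$. The joint estimate of $\mybar\X$ is therefore $\mybar\X(\hat\F_{\rm J})$, with $\hat\F_{\rm J}$ the minimizer of the concentrated objective. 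Comparing with \eqref{XBarHat_C2}, the theorem reduces to showing that
\begin{align*}
(\G^H\G)^{-1}\G^H \hat\F_{\rm J} \mybar\SSSS = (\G^H\G)^{-1}\G^H \hat\F \mybar\SSSS ,
\end{align*}
where $\hat\F$ is given by \eqref{eq:hatF}. A sufficient condition is $\G^H \hat\F_{\rm J} = \G^H \hat\F$.

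Second, I decompose $\F = \PP\F + \PP^\perp\F$, with $\PP \triangleq \G(\G^H\G)^{-1}\G^H$. Because $\PP^\perp\PP = \0$, the second term of the concentrated objective depends only on $\PP^\perp\F$. The first term splits orthogonally as
\begin{align*}
\|\PP(\Y_{1:J} - \F\SSSS_{1:J})\|^2 + \|\PP^\perp(\Y_{1:J} - \F\SSSS_{1:J})\|^2 .
\end{align*}
Hence $\PP\F$ appears only in $\|\PP\Y_{1:J} - \PP\F\SSSS_{1:J}\|^2$, and this term is decoupled from the part involving $\PP^\perp\F$.

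Third, the decoupled term is minimized by $\PP\F = \PP\Y_{1:J}\SSSS_{1:J}^H(\SSSS_{1:J}\SSSS_{1:J}^H)^{-1} = \PP\hat\F$. The inverse exists because $\SSSS_{1:J}$ is $J\times J$ and invertible with probability one by Assumption \ref{assum:VS}. With an invertible $\SSSS_{1:J}$ the minimum residual is exactly zero, so this choice is unconstrained by the rest of the objective. It follows that $\PP\hat\F_{\rm J} = \PP\hat\F$, and multiplying by $\G^H$ (using $\G^H\PP = \G^H$) gives $\G^H\hat\F_{\rm J} = \G^H\hat\F$, which is the sufficient condition from the first step.

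The main obstacle is the bookkeeping in the decoupling step. One must check that the $\PP\F$ and $\PP^\perp\F$ components can be optimized independently, and that the uniqueness guaranteed by Theorem \ref{theorem:dimNAbar} (equivalently, invertibility of $\mybar\E^{-1}$ in \eqref{Ebar_T=1}) ensures the stationary point just found is the joint minimizer. Once that is in place, the remaining steps are routine projection identities.
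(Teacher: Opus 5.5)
Your proposal is correct and follows essentially the same route as the paper. Both arguments concentrate out $\mybar\X$ and split $\F$ (and the residual on the first $J$ samples) into components in $\mathrm{sp}(\G)$ and $\mathrm{sp}(\G)^{\perp}$. Both then note that only the $\mathrm{sp}(\G)$ component, which is fitted from $\Y_{1:J}$ alone, enters $\hat{\mybar\X}$, and this gives $\G^H\hat\F_{\rm J} = \G^H\hat\F$.
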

\begin{proof}
    See Appendix \ref{appendix:eqMethods}.
\end{proof}

\vspace{-3mm}

\subsection{The Proposed Methods are Independent of the AAI Power}
\label{sec:indep}
Notably, the proposed methods (both for the full-rank case and the rank-deficient case) are independent of the DL SNR, i.e., they are independent of the interference power.

We can observe this by looking at the covariance matrices of the data estimates in \eqref{Covxt} and \eqref{Covxtbar_T=1}. The first term in the covariance matrices depends only on $\{ \G_t \}$ and is clearly independent of the DL SNR. Next, we focus on the second term of \eqref{Covxt}, where the DL SNR enters through $\{ \SSSS_t \}$, appearing twice in \eqref{Covxt} which causes a proportional increase. However, this is compensated for by the covariance matrix of the AP-AP channel estimate $\E$ in \eqref{E}, which is inversely proportional to the DL SNR. To make this mathematically precise, we express the data matrices $\{ \SSSS_t \}$ as $\SSSS_t = \sqrt{\rho_\text{d}} \tilde{\SSSS}_t$, where $\rho_\text{d}$ is the DL SNR and the elements of $\tilde{\SSSS}_t$ have unit variance. Thus, we have $\E = \tilde{\E} / \rho_{\text{d}}$, where 
\begin{align}
    \nonumber
    \tilde\E = \left(\sum_{t \in \TT} \V_t^* \tilde{\SSSS}_t^* \tilde{\SSSS}_t^T \V_t^T \otimes (\II_M - \G_t (\G_t^H \G_t)^{-1} \G_t^H) \right)^{-1}.
\end{align}
Now consider the second term of \eqref{Covxt}:
\begin{align}
    \label{MSEtrace}
    \nonumber
    & (\SSSS_t^T \V_t^T \otimes (\G_t^H \G_t)^{-1} \G_t^H) \E (\V_t^* \SSSS_t^* \otimes \G_t (\G_t^H \G_t)^{-1})
    \\
    \nonumber
    = & (\sqrt{\rho_\text{d}} \tilde{\SSSS}_t^T \V_t^T \otimes (\G_t^H \G_t)^{-1} \G_t^H) \tilde{\E} / \rho_{\text{d}}
    \\
    \nonumber
    & \qquad  \times (\V_t^* \sqrt{\rho_\text{d}} \tilde{\SSSS}_t^* \otimes \G_t (\G_t^H \G_t)^{-1})
    \\
    = & ( \tilde{\SSSS}_t^T \V_t^T \otimes (\G_t^H \G_t)^{-1} \G_t^H) \tilde{\E} (\V_t^*  \tilde{\SSSS}_t^* \otimes \G_t (\G_t^H \G_t)^{-1}),
\end{align}
where the final expression is independent of $\rho_{\text{d}}$. It follows that the second term of \eqref{Covxt}, and hence the full covariance matrix, is independent of the DL SNR. (The motivation for \eqref{Covxtbar_T=1} is analogous; we omit the details for brevity.) The intuition is that the channel estimate $\hat\h$ (or $\hat{\f}$) improves when increasing the DL SNR, which compensates for the UL data being observed in stronger AAI.

\subsection{Baseline Methods}
\label{sec:baselines}

We present two baselines to compare  our proposed methods against: a genie-aided detector with perfect AAI mitigation, and a worst-case detector without AAI mitigation.

\subsubsection{Genie-Aided Baseline (Perfect AAI Mitigation)}

We assume that the AP-AP channel $\HH$ is known to the UL AP. Thus, the AAI can be subtracted from the received signal \eqref{Yt} to get
\begin{align}
    \label{dotY}
    \dot\Y_t \triangleq \Y_t - \HH \V_t \SSSS_t = \G_t \X_t + \W_t \in \C^{M\times \tau}.
\end{align}
Let $\dot\y_t = \vecc(\dot\Y_t) \ \forall t \in \TT$, and collect these in
\begin{align}
    \label{ydot}
    \dot\y \triangleq
    \begin{bmatrix}
        \dot\y_1 \\ \vdots \\ \dot\y_T
    \end{bmatrix}
    = \dot\A \x + \w \in \C^{TM\tau \times 1},
\end{align}
where $\dot\A \in \C^{TM\tau \times TK\tau}$ is given by
\begin{align}
    \label{Adot}
    \dot\A & =
    \begin{bmatrix}
        \II_{\tau}\otimes \G_1 & & \0
        \\
        & \ddots &
        \\
        \0 & & \II_{\tau}\otimes \G_T
    \end{bmatrix}.
\end{align}
The matrix $\dot\A$ clearly has full column rank. Thus, the unique least-squares estimate of $\x$ from \eqref{ydot} is
\begin{align}
    \label{eq:xg}
    \nonumber
    \hat\x_{\text{g}} & \triangleq \argmin_{\x} \| \dot\y - \dot\A \x \|^2 = (\dot\A^H \dot\A)^{-1} \dot\A^H \dot\y
    \\
    & = (\dot\A^H \dot\A)^{-1} \dot\A^H (\dot\A \x + \w) = \x + \dot\w,
\end{align}
where
\begin{align}
    \dot\w \triangleq (\dot\A^H \dot\A)^{-1} \dot\A^H \w \sim \CN(\0_{TK\tau \times 1}, (\dot\A^H \dot\A)^{-1}).
\end{align}
The covariance of $\hat\x_{\text{g}}$ is $\Cov(\hat\x_{\text{g}}) = \Cov(\dot\w)=(\dot\A^H \dot\A)^{-1}$.

\subsubsection{Baseline Without AAI Mitigation}
Here, we proceed as if we received the interference-free signal \eqref{ydot}, although we actually received the signal with interference in \eqref{y}. The estimate with no AAI mitigation, $\hat\x_{\text{n}}$, is given by
\begin{align}
    \label{eq:xhatn}
    \nonumber
    \hat\x_{\text{n}} & \triangleq (\dot\A^H \dot\A)^{-1} \dot\A^H \y = (\dot\A^H \dot\A)^{-1} \dot\A^H (\A \z + \w)
    \\
    \nonumber
    & = (\dot\A^H \dot\A)^{-1} \dot\A^H
    \begin{bmatrix}
        | & \SSSS_1^T \V_1^T \otimes \II_{M}
        \\
        \dot\A & \vdots
        \\
        | & \SSSS_T^T \V_T^T \otimes \II_{M}
    \end{bmatrix}
    \begin{bmatrix}
        \x \\ \h
    \end{bmatrix}
    + \dot\w
    \\
    & =
    \x +
    \underbrace{
    \begin{bmatrix}
        \SSSS_1^T \V_1^T \otimes (\G_1^H\G_1)^{-1}\G_1^H
        \\
        \vdots
        \\
        \SSSS_T^T \V_T^T \otimes (\G_T^H\G_T)^{-1}\G_T^H
    \end{bmatrix}
    }_{\triangleq \B}
    \h
    + \dot\w.
\end{align}
This estimate has the same covariance as the genie-aided estimate, i.e. $\Cov(\hat\x_{\text{n}}) = \Cov(\hat\x_{\text{g}})$,
but it is biased:
\begin{align}
    \Bias(\hat\x_{\text{n}}) = \B \h,
\end{align}
where $\B$ was defined in \eqref{eq:xhatn} above.

\section{Practical Considerations}

In this section, we will discuss some practical aspects of our  models and assumptions. 

\vspace{-3mm}
\subsection{Feasible Values of $T$ in Practical Systems}
\label{sec:Tdisc}

In general, for a wireless channel, the coherence interval length is given by $T_{\text{c}} B_{\text{c}}$, where $T_{\text{c}}$ is the coherence time and $B_{\text{c}}$ is the coherence bandwidth. 
In what follows, we assume that $B_{\text{c}}$ is the same for all channels (both AP-AP and UE-AP channels); however, $T_{\text{c}}$ will differ.

Let $T_{\text{c,UE-AP}}$ and $T_{\text{c,AP-AP}}$ be the coherence times of the UE-AP channels and the AP-AP channel, respectively. Then, 
$\tau = T_{\text{c,UE-AP}} B_\text{c}$, $T \tau = T_{\text{c,AP-AP}} B_\text{c}$, and $T = T_{\text{c,AP-AP}} / T_{\text{c,UE-AP}}$.  Following the two-path model \cite[Eq. (2.3)]{ngo16} the coherence time is $T_\text{c} = \lambda / (2v)$, where $\lambda$ is the carrier wavelength and $v$ is the UE velocity. In an urban environment, the velocities of UEs will, roughly, vary from $3$ km/h (pedestrian) to $60$ km/h (vehicle). With a carrier frequency of $2$ GHz ($\lambda = 0.15$ m), we have $T_{\text{c,UE-AP}} \approx 90$ ms for a pedestrian and $T_{\text{c,UE-AP}} \approx 4.5$ ms for a vehicle. Although the two APs are static, the AP-AP channel still varies slowly over time due to dynamics in the surroundings. A conservative assumption is that the AP-AP channel varies at the same rate as if one of the APs was moving with a velocity of $3$ km/h, yielding a coherence time of $T_{\text{c,AP-AP}} \approx 90$ ms. Thus, in an urban environment, we can conservatively assume $T_{\text{c,AP-AP}} / T_{\text{c,UE-AP}} = T \in \{ 1, \dots, 20 \}$. In high-mobility cases, one can argue for larger values of $T$.

In Table \ref{table:T}, we investigate whether these values of $T$ are sufficient to make $\A$ full-rank, by presenting the $T$ necessary to fulfill the conjectured lower bound \eqref{eq:conjboundT}. For $T \geq 10$, $\A$ has full rank in all examples. However, a specific value of $T$ cannot be ensured in practice; thus the rank-deficient case must also be considered. 

\begin{table}[t!]
    \caption{The necessary values of $T$ and $\tau$ to make the matrix $\A$ tall and full-rank for some network setups}
    \label{table:T}
    \centering
    \begin{tabular}{c c | c | c} \toprule
        {AP antennas} & {Number of UEs} & {Necessary} & {Necessary} \\
        $M=N$ & $K=J$ &  $T \geq$ & $\tau \geq$ \\ \midrule
        8 & 4 & 4 & 4 \\
        16  & 4 & 6 & 4 \\
        32 & 4  & 10 & 4 \\
        32 & 8 & 6 & 8 \\
        64 & 8 & 10 & 8 \\
        64 & 16 & 6 & 15 \\
        \bottomrule
    \end{tabular}
\end{table}

\vspace{-3mm}
\subsection{Latency and Memory Growth}
\label{sec:latency}

When jointly processing the UL data over $T$ coherence intervals, latency increases  since the data is only decoded at the end of each AP-AP coherence interval. The latency is determined by the AP-AP coherence time $T_\text{c,AP-AP}$. For the example in Section \ref{sec:Tdisc}, the worst-case latency (for the first block of data) is $90$ ms, which is acceptable in delay-tolerant services.  
Since $T$ is a design variable upper-bounded by the channel characteristic and lower-bounded by rank constraints, we can choose any value in this range. That is, although the channel characteristics allow a certain value of $T$, one can choose to use a lower value in the implementation to fulfill constraints of latency-sensitive services. In extremely latency-critical scenarios, we can always use our proposed method for the rank-deficient case where $T=1$ as a fallback approach.

Further, the memory demand for storing $\{ \Y_t \}$, $\{ \G_t \}$, and $\{ \V_t \SSSS_t \}$ at the UL AP (as required for our proposed full-rank method) grows linearly with $T$. Specifically, the number of complex-valued variables that must be saved in memory to jointly process $T$ coherence intervals is $T(M\tau + MK + N\tau)$.

\vspace{-3mm}
\subsection{Perfectly Known $\{ \G_t \}$, $\{ \SSSS_t \}$, and $\{ \V_t \}$ at the UL AP}

	In practice, the UE-AP channels $\{ \G_t \}$ must be estimated in each UE-AP coherence interval by transmitting orthogonal pilots from the UL UEs. This induces an overhead of (at least) $K$ samples per UE-AP coherence interval, i.e., the channel estimation occupies a fraction $K/\tau$ of the time-frequency resources. Importantly, these channels must be estimated in any massive MIMO system; that is, our AAI mitigation methods do not increase the UE-AP channel estimation overhead beyond what is required in a standard static TDD setup. With imperfect channel knowledge, our proposed methods and baselines can be applied by inserting the estimates $\{ \hat\G_t \}$ instead of the true channels $\{ \G_t \}$ into the regression matrices $\A$, $\mybar\A$, and $\dot\A$, yielding mismatched least-squares solutions. The performance degradation caused by the mismatched detection depends on the quality of the channel estimates.
	
	Further, the assumption of known $\{ \V_t \}$ and $\{ \SSSS_t \}$ at the UL AP requires the DL AP to transfer $NJ + J\tau$ complex-valued symbols in each UE-AP coherence interval.
    In practice it is sufficient to transfer the effective DL signals $\{ \V_t \SSSS_t \}$ to the UL AP, in which case the number of complex-valued symbols is $N\tau$. Thus, over $T$ coherence intervals, the DL AP must transfer a total of $T \cdot \min \{ J(N+\tau), N\tau \}$ complex-valued symbols to the UL AP. These symbols are transferred via a backhaul network, and our methods are reliant on having a backhaul network with sufficiently high capacity to transfer these symbols to the UL AP in a timely manner.
    
    For AP-AP transfer, the transport delay (to transfer $\{ \V_t \SSSS_t \}$ to the UL AP over the backhaul) is on the order of a few to tens of milliseconds in practical systems \cite[Table 6.1-1]{tr36932}. In a centralized architecture, the transport delay can be reduced further. In any case, the delay is relatively small compared to the latency introduced by the joint processing over $T$ UE-AP coherence intervals. Thus, the backhaul delay will typically not be a limiting factor for our methods. If necessary, one can reduce the number of UE-AP coherence intervals to jointly process, i.e., operate with a smaller value of $T$ as discussed in Section \ref{sec:latency}, to compensate for the transport latency.

\vspace{-3mm}
\subsection{How Large $\tau$ is Required to Fulfill Assumption \ref{assum:A}?}
\label{sec:taudisc}

In Table \ref{table:T}, we present the lower bound on $\tau$ required to fulfill Assumption \ref{assum:A}, for the smallest value of $T$ that is necessary for full-rank $\A$. We note that these values are very moderate; hence we can expect Assumption \ref{assum:A} to be fulfilled in all practical scenarios. In fact, with $M=N$, the lower bound \eqref{eq:tau} is $\tau \geq M^2 / T (M-K)$. By inserting \eqref{eq:conjboundT} we get
\begin{align}
    \dfrac{M^2}{T (M-K)} \leq \dfrac{(M-K)K}{M^2} \dfrac{M^2}{(M-K)} = K.
\end{align}
That is, with $M=N$, the lower bound on $\tau$ is upper bounded by $K$, i.e., $\tau \geq K$ is enough to fulfill Assumption \ref{assum:A}.

\subsection{Computational Complexity}

A na\"ive approach to calculate the least-squares solution in the full-rank case is to directly invert $\A^H \A$, requiring a computational cost of $\OO((TK\tau + MN)^3)$, which scales cubically in $T$. However, by exploiting the bordered block-diagonal structure of $\A$ in \eqref{A} via the Schur complement, the cost can be reduced to $\OO\big( T \cdot ((MN)^2 + (M+N)^2 \tau) + (MN)^3 \big)$, i.e.\ linear in $T$ plus a one-time $\OO((MN)^3)$ cost for solving a reduced $MN \times MN$ system in $\h$. A detailed algorithm and complexity analysis is provided in Appendix \ref{appendix:complexity}.

\vspace{-3mm}
\subsection{Numerical Stability and Robustness}

We solve the least-squares problem after a standard column rescaling to unit norm, which yields a well-conditioned system. Let $\D$ be a diagonal matrix whose diagonal elements are the norms of the corresponding columns of $\A$. Then, we can rewrite the nominal problem \eqref{LSvec} as
\begin{align}
    \argmin_{\z} \| \y - \A \z \|^2 = \argmin_{\z} \| \y - \A \D^{-1} \D \z \|^2,
\end{align}
which can be solved for $\D \z$ using the scaled regression matrix $\A \D^{-1}$ with unit norm columns. The solution $\hat\z$ is recovered by multiplying the solution for $\D \z$ by $\D^{-1}$.

Consider the setup used later in our numerical evaluations (see Section \ref{sec:setup} for details), with $M=N=32$, $K=J=16$, and $\tau = 50$. All channels are i.i.d. Rayleigh fading and the DL AP performs maximum-ratio precoding, such that Assumption~\ref{assum:VS} is fulfilled. For $T=10$, numerical evaluations show that the median condition number of the scaled regression matrix $\A \D^{-1}$ is $\approx 20$. 

Further evaluations show that our methods are robust to scenarios where the strict independence in Assumption \ref{assum:VS} is relaxed. Specifically, when the DL AP applies zero-forcing instead of maximum-ratio precoding (which breaks the independence of the elements in $\{ \V_t \}$), the matrix $\A$ remains full-rank, and the median condition number of the rescaled matrix $\A\D^{-1}$ increases only slightly, from $\approx 20$ to $\approx 30$. Breaking the independence of $\{ \G_t \}$ likewise keeps $\A$ full-rank while increasing its condition number. For correlated Rayleigh fading channels with a Gaussian angular distribution with standard deviation $15^\circ$ \cite[Eq. (2.24)]{emil17}, the rescaled condition number increases to $\approx 100$. In the extreme case with an angular standard deviation of $0^\circ$, the channel matrices $\{ \G_t \}$ are rank-one and $\A$ becomes rank-deficient.

These results indicate that the sufficiency condition in Theorem \ref{theorem:M=N=2K=2J} may extend beyond Assumption \ref{assum:VS}, except in degenerate cases where $\A$ loses rank (such as the $0^\circ$ scenario above). A comprehensive numerical study (and theoretical extension) is required to make this precise, which we leave for future work.

\vspace{-3mm}
\subsection{The Block-Fading Model in our Context}

The analysis in this paper, and specifically the signal model in \eqref{Yt}, relies on a strict block-fading assumption where all channels are \emph{exactly} static throughout each coherence interval. This requires the definition of ``coherence'' to be rather conservative,  using a value of $\tau$ small enough that the variation of the channel within an interval is negligible.

Consider the example in Section \ref{sec:Tdisc}, with $T_\text{c,AP-AP} \approx 90$ ms and $T_\text{c,UE-AP} \in [4.5,\dots,90]$ ms. Combined with a typical outdoor coherence bandwidth $B_\text{c} \approx 300$ kHz \cite[Table 2.1]{ngo16}, we get an UE-AP coherence interval length in the range from $\tau \approx 1350$ (vehicle UE) to  $\tau \approx 27000$ (pedestrian UE). By using a conservative value, say $\tau = 50$, which is one to three orders of magnitude smaller than the coherence interval lengths above, then the block-fading model is a good approximation and the variations within one interval are negligible. As seen in Section \ref{sec:taudisc}, our methods are applicable for small values of $\tau$. 

However, it should be noted that in many other cases (e.g., ergodic capacity bounding \cite{ngo16}) one can define larger coherence intervals since the channel responses can be interpolated \emph{between coherence intervals}, which is not possible for our proposed algorithms.\footnote{For example, \cite{ngo16} allows for a channel phase rotation of $\pi/2$ within a coherence interval, which corresponds to Nyquist sampling of the time-varying frequency response and therefore allows error-free interpolation, see \cite[Sec.~2.1.4]{ngo16}. } Developing versions of our algorithms that allow for continuous variation of the channel response is an interesting topic for future work.  One possible direction is to model the time-frequency variations within a coherence interval by introducing a low-dimensional basis expansion framework, following the lines of \cite{bai2024JSTSP}.

\section{Numerical Results}
\label{sec:numerical}
In this section, we evaluate our proposed methods numerically. First, we calculate the MSEs of the data symbol estimates achieved by our proposed methods, and compare them to those obtained by the two baselines (see Section \ref{sec:baselines}). Then, we derive achievable SEs for our methods and compare them to the SEs obtained by the baselines and a state-of-the-art zero-forcing based AAI mitigation technique.

\subsection{Simulation Setup}
\label{sec:setup}
We consider the setup from Fig. \ref{Fig:system} with $M=N=32$ co-located UL/DL AP antennas and $K=J=16$ single-antenna UL/DL UEs. The two APs are located $500$ m apart. All UL and DL UEs are distributed uniformly at random within a circle of radius $250$ m around the respective AP. We use $\tau = 50$ which ensures Assumption \ref{assum:A}. The UL SNR $\rho_{\text{u}}$ is chosen such that the median SNR observed at the receiver side when transmitting with full power from a user at the cell-edge is $10$ dB, and $\rho_{\text{d}}$ is chosen such that the corresponding SNR from the DL AP to the UL AP is $20$ dB. 

We assume i.i.d. Rayleigh fading channels and include the UL SNR in the UL channels, i.e., they are generated as
\begin{align}
    \g_{k,t} & \sim \CN (\0_{M \times 1}, \rho_{\text{u}} \eta_{k,t} \beta_{k,t} \II_M)\ \forall k,t
    \\
    \h_{n} & \sim \CN (\0_{M \times 1}, \alpha \II_M)\ \forall n,
\end{align}
where $\beta_{k,t}$ and $\alpha$, respectively, are the large-scale fading coefficients from UL UE $k$ and the DL AP to the UL AP. Also, we introduced the UL power control coefficients $\{ \eta_{k,t} \}$. Further, assume that the channels between the DL UEs and the DL AP are i.i.d. Rayleigh fading and known at the DL AP. We let the DL AP perform maximum-ratio precoding. Thus, the DL precoding matrices are
\begin{align}
    \V_t & = 
    \begin{bmatrix}
        \vv_{1,t} & \dots & \vv_{J,t}
    \end{bmatrix}
    \\
    \vv_{j,t} & \sim \CN(\0_{N \times 1}, \theta_{j,t} \gamma_{j,t} \II_N / \Xi),
\end{align}
where $\gamma_{j,t}$ is the large-scale fading coefficient from the DL AP to DL UE $j$ in coherence interval $t$ and $\{ \theta_{j,t} \}$ are the corresponding DL power control coefficients. The normalization coefficient $\Xi$ of the covariance matrix is chosen to ensure the DL power constraint $\EEE \{ \trace \{ \V_t^H \V_t \}  \} = 1$. All large-scale fading coefficients $\{ \beta_{k,t} \}$, $\alpha$, and $\{ \gamma_{j,t} \}$ are calculated from the outdoor 3GPP Urban Microcell model \cite[Eqs. (37), (38)]{emil20TWC}, with elevations of 1 m and 10 m for the UEs and APs, respectively. The UL and DL power control coefficients are chosen as the ones achieving max-min fairness in a single-cell (static TDD) setup.

We let the DL data symbols, i.e. the elements of $\{ \SSSS_t \}$, be i.i.d. $\CN(0, \rho_{\text{d}})$. With these models, Assumption \ref{assum:VS} is fulfilled.

\subsection{Evaluation of Mean-Square Errors}

Here, we evaluate our proposed methods and compare them to the baselines in terms of their MSEs per data symbol.

\subsubsection{Calculating MSEs}
\label{sec:MSE}
The MSE of an estimator $\hat\x$ of $\x$ is
\begin{align}
    \label{mse}
    \MSE(\hat\x) & = \EEE \{ \| \hat\x - \x \|^2 \} = \trace(\Cov(\hat\x)) + \|\Bias(\hat\x)\|^2.
\end{align}
The bias term is non-zero only in the baseline without AAI mitigation. Normalizing \eqref{mse} by the dimension of $\x$ yields the following MSE expressions, conditioned on the known effective DL signals $\{ \V_t \SSSS_t \}$ (we omit the derivations):
\begin{itemize}
    \item Genie-Aided Baseline:
    \begin{align}
        \MSE(\hat\x_{\text{g}}) = \dfrac{1}{TK} \sum_{t \in \TT} \trace\left((\G_t^H\G_t)^{-1}\right).
    \end{align}
    \item Baseline Without AAI Mitigation:
    \begin{align}
        \nonumber
        & \MSE(\hat\x_{\text{n}}) = \MSE(\hat\x_{\text{g}}) + \dfrac{1}{TK\tau} \h^H \Bigg( \sum_{t\in\TT}
        \\
        & \qquad \V_t^* \SSSS_t^* \SSSS_t^T \V_t^T \otimes \G_t (\G_t^H \G_t)^{-2} \G_t^H \Bigg) \h.
    \end{align}
    \item Proposed Method for Full-Rank $\A$:
    \begin{align}
        \label{MSE_proposed}
        \nonumber
        & \MSE(\hat\x) = \MSE(\hat\x_{\text{g}}) + \dfrac{1}{TK\tau} \sum_{t\in\TT} \trace \Big(
        \\
        & \qquad ( \V_t^* \SSSS_t^* \SSSS_t^T \V_t^T \otimes \G_t (\G_t^H \G_t)^{-2} \G_t^H) \E \Big).
    \end{align}
    \item Proposed Methods for Rank-Deficient $\A$:
    \begin{align}
        \label{MSE_fallback}
        \nonumber
        & \MSE(\hat{\mybar\x}) = \MSE(\hat\x_{\text{g}})|_{T=1} + \dfrac{1}{K(\tau-J)} \trace \Big(
        \\
        & \qquad ( \mybar\SSSS^* \mybar\SSSS^T \otimes \G (\G^H \G)^{-2} \G^H) \mybar\E \Big).
    \end{align}
\end{itemize}
We observe that the last three MSEs are expressed in terms of the genie-aided baseline MSE plus an additional term arising due to imperfect AP-AP channel estimation. Further, the MSEs of our two proposed methods are independent of the DL SNR $\rho_\text{d}$, which follows directly from the fact that the corresponding covariance matrices are independent of $\rho_\text{d}$ (see Section \ref{sec:indep}).

\begin{figure}[!t]
    \centering
    \begin{tikzpicture}
         \begin{axis}[
            width=0.45\textwidth,
            height=0.3\textwidth,
            xmajorgrids,
            ymajorgrids,
            xlabel=$T$,
            ylabel=$\MSE$ (dB),
            xmin=5,
            xmax=20,
            ymin=-20, 
            ymax=5,
            legend columns=1,
            legend style = {at={(1,0.4)},
            anchor={east},
            legend cell align=left,
            nodes={scale=0.7, transform shape}}]

            \addplot [color=black, dashed, line width=2.0pt]
              table[row sep=crcr]{%
            5	0.7969\\
            6	0.7969\\
            7	0.7969\\
            8	0.7969\\
            9	0.7969\\
            10	0.7969\\
            11 0.7969\\
            12 0.7969\\
            13 0.7969\\
            14 0.7969\\
            15 0.7969\\
            16 0.7969\\
            17 0.7969\\
            18 0.7969\\
            19 0.7969\\
            20 0.7969\\
            };
            \addlegendentry{No AAI Mitigation}            

            \addplot [color=red, line width=2.0pt, mark=triangle, mark options={solid}, mark phase=2, mark repeat=6]
              table[row sep=crcr]{%
            5	-3.3171\\
            6	-3.3171\\
            7	-3.3171\\
            8	-3.3171\\
            9	-3.3171\\
            10	-3.3171\\
            11 -3.3171\\
            12 -3.3171\\
            13 -3.3171\\
            14 -3.3171\\
            15 -3.3171\\
            16 -3.3171\\
            17 -3.3171\\
            18 -3.3171\\
            19 -3.3171\\
            20 -3.3171\\
            };
            \addlegendentry{Proposed (Rank-Deficient $\A$)}   
  
            \addplot [color=blue, line width=2.0pt, mark=diamond, mark options={solid, blue}, mark phase=2, mark repeat=6]
              table[row sep=crcr]{%
            5	-10.0476\\
            6	-13.4070\\
            7	-14.9090\\
            8	-15.5299\\ 
            9	-16.1149\\ 
            10	-16.3733\\
            11 -16.5578\\
            12 -16.7500\\
            13 -16.8204\\
            14 -16.9005\\
            15 -17.0001\\
            16 -17.0698\\
            17 -17.0988\\
            18 -17.1243\\
            19 -17.1433\\
            20 -17.1566\\
            };
            \addlegendentry{Proposed (Full-Rank $\A$)}
            
            \addplot [color=black, line width=2.0pt]
              table[row sep=crcr]{%
            5   -18.5727\\
            6	-18.5727\\
            7	-18.5727\\
            8	-18.5727\\
            9	-18.5727\\
            10  -18.5727\\
            11 -18.5727\\
            12 -18.5727\\
            13 -18.5727\\
            14 -18.5727\\
            15 -18.5727\\
            16 -18.5727\\
            17 -18.5727\\
            18 -18.5727\\
            19 -18.5727\\
            20 -18.5727\\
            };
            \addlegendentry{Perfect AAI Mitigation}
        \end{axis}
    \end{tikzpicture}
    \caption{
    MSEs of the proposed methods and the baselines for different $T$.
    }
    \label{Fig:MSE_var_T_MNKJ}
\end{figure}

\subsubsection{Numerical Evaluations}

We evaluate our proposed methods, both for full-rank and rank-deficient $\A$, by comparing their MSEs to the genie-aided baseline with perfect AAI mitigation and the baseline without AAI mitigation. Recall that both algorithms for the rank-deficient case are equal; they are not differentiated hereafter. Fig. \ref{Fig:MSE_var_T_MNKJ} presents the median MSEs obtained with each of the methods for varying $T \in \{ 5, \dots, 20 \}$. The medians are taken over 1000 independent UE deployments. We summarize our main findings:
\begin{itemize}
	\item We observe that the MSE of our proposed method for the full-rank case approaches that of the genie-aided baseline when $T$ grows. This indicates that the slow time-variations of the AP-AP channel can be effectively leveraged to enhance AAI mitigation in dynamic TDD. The intuition behind this behavior is that when observing the same interference channel over multiple UE-AP coherence intervals, the corresponding channel estimate will improve, making the AAI mitigation more efficient. For example, at $T=5$ the MSE of our proposed method exceeds that of the genie-aided baseline by roughly $9$ dB, whereas for $T>10$ this gap is smaller than $2$ dB.
	\item Our proposed method for the full-rank case heavily outperforms the baseline without AAI mitigation, even for small values of $T$. Already with $T=5$ the difference is larger than $10$ dB. With $T=20$, the same difference has grown to $18$ dB. 
	\item Our proposed method for the rank-deficient case gives an improvement of approximately $4$ dB over the baseline without AAI mitigation. This observation indicates that our joint estimation approach in itself achieves a performance gain, but that it is vital to leverage the slow time-variation of the AP-AP channel to reach a performance similar to the genie-aided baseline.
\end{itemize}

\begin{figure}[!t]
    \centering
    \begin{tikzpicture}
	\begin{groupplot}[
		group style={
		group size=1 by 2,
            vertical sep=2cm,
		},
            width=0.45\textwidth,
            height=0.3\textwidth,
            xmajorgrids,
            ymajorgrids,
            legend columns=1,
            transpose legend,
            legend style = {at={(1,1)},
            anchor={north east},
            legend cell align=left,
            nodes={scale=0.7, transform shape}}
            ]
            \nextgroupplot[title={(a) MSEs for varying UL SNRs.},xlabel={UL SNR (dB)},ylabel={$\MSE$ (dB)}, xmin=0, xmax=20, ymin=-25, ymax=25]

            \addplot [color=black, dashed, line width=2.0pt]
              table[row sep=crcr]{%
            0	10.8517\\
            5	5.6591\\
            10	0.7969\\
            15	-4.2450\\
            20	-9.1929\\
            };
            \addlegendentry{No AAI Mitigation}
            
            \addplot [color=red, line width=2.0pt, mark=triangle, mark options={solid, red}, mark phase=2, mark repeat=2]
              table[row sep=crcr]{%
            0	6.7258\\
            5	1.6467\\
            10	-3.3171\\
            15	-8.4337\\
            20	-13.4200\\
            };
            \addlegendentry{Proposed (Rank-Deficient $\A$)}   
  
           \addplot [color=blue, line width=2.0pt, mark=square*, mark options={solid, blue}, mark phase=2, mark repeat=2]
              table[row sep=crcr]{%
            0	-0.0378\\
            5	-5.0568\\
            10	-10.0476\\
            15	-15.0557\\
            20	-20.0024\\
            };                                  
            \addlegendentry{Proposed (Full-Rank $\A$, $T=5$)}

            \addplot [color=blue, line width=2.0pt, mark=*, mark options={solid, blue}, mark phase=2, mark repeat=2]
              table[row sep=crcr]{%
            0	-7.1468\\
            5	-12.1658\\
            10	-17.1566\\
            15	-22.1647\\
            20	-27.1114\\
            };                                  
            \addlegendentry{Proposed (Full-Rank $\A$, $T=20$)}
            
            \addplot [color=black, line width=2.0pt]
              table[row sep=crcr]{%
            0	-8.5629\\
            5	-13.5819\\
            10	-18.5727\\
            15	-23.5808\\
            20	-28.5275\\
            };
            \addlegendentry{Perfect AAI Mitigation}
            
            \nextgroupplot[title={(b) MSEs for varying DL SNRs.},xlabel={DL SNR (dB)},ylabel={$\MSE$ (dB)}, xmin=0, xmax=30, ymin=-25, ymax=25,legend style={at={(0,1)}, anchor=north west}]
            \addplot [color=black, dashed, line width=2.0pt]
              table[row sep=crcr]{%
            0	-15.5853\\
            5   -12.7709\\
            10	-8.6513\\
            15	-3.9585\\
            20	0.7969\\
            25	5.8468\\
            30  10.7783\\
            };
            \addlegendentry{No AAI Mitigation}            

            \addplot [color=red, line width=2.0pt, mark=triangle, mark options={solid, red}, mark phase=2, mark repeat=4]
                table[row sep=crcr]{%
            0	-3.3171\\
            5	-3.3171\\
            10	-3.3171\\
            15	-3.3171\\
            20	-3.3171\\
            25	-3.3171\\
            30  -3.3171\\
            };
            \addlegendentry{Proposed (Rank-Deficient $\A$)}s
            
           \addplot [color=blue, line width=2.0pt, mark=square*, mark options={solid, blue}, mark phase=2, mark repeat=4]
              table[row sep=crcr]{%
            0	-10.0476\\
            5	-10.0476\\
            10	-10.0476\\
            15	-10.0476\\
            20	-10.0476\\
            25	-10.0476\\
            30  -10.0476\\
            };                                  
            \addlegendentry{Proposed (Full-Rank $\A$, $T=5$)}

            \addplot [color=blue, line width=2.0pt, mark=*, mark options={solid, blue}, mark phase=2, mark repeat=4]
              table[row sep=crcr]{%
            0	-17.1566\\
            5	-17.1566\\
            10	-17.1566\\
            15	-17.1566\\
            20	-17.1566\\
            25	-17.1566\\
            30  -17.1566\\
            };                                  
            \addlegendentry{Proposed (Full-Rank $\A$, $T=20$)}

            \addplot [color=black, line width=2.0pt]
              table[row sep=crcr]{%
            0	-18.5727\\
            5	-18.5727\\
            10	-18.5727\\
            15	-18.5727\\
            20	-18.5727\\
            25	-18.5727\\
            30  -18.5727\\
            };
            \addlegendentry{Perfect AAI Mitigation}

        \end{groupplot}
    \end{tikzpicture}
    
    \caption{
    Comparison between the proposed methods and the baselines with different values of the UL and DL SNRs.
    }
    \label{Fig:MSE_var_UL_DL}
\end{figure}

Next, in Fig. \ref{Fig:MSE_var_UL_DL}, we present the median MSEs obtained with the proposed methods and baselines for varying UL and DL SNRs $\rho_{\text{u}}$ and $\rho_{\text{d}}$. We make the following observations:
\begin{itemize}
	\item When varying the UL SNR, all MSEs decrease linearly (in log-log scale) but the gaps remain equal. This behavior is expected as the UL SNR enters the MSE expressions through $\{ \G_t \}$. It can be seen that the genie MSE and all terms arising due to the imperfect AP-AP channel estimation are inversely proportional to the UL SNR.
	\item As mentioned in Section \ref{sec:MSE} above, the MSEs of our proposed methods are independent of the DL SNR. This behavior can also be observed from Fig.~\ref{Fig:MSE_var_UL_DL}(b).
	\item The baseline without AAI mitigation performs better than the proposed method for the rank-deficient case at low DL SNRs. This is because, for decreasing DL SNRs, the gain in mitigating the AAI becomes more limited. We also observe that the MSE without AAI mitigation approaches that of the genie-aided baseline at low DL SNRs, which is reasonable as there is virtually no gain in mitigating the AAI when its power is negligibly small. 
\end{itemize}

\subsection{Evaluation of Spectral Efficiencies}
\begin{figure}[!t]
    \centering
    \begin{tikzpicture}
         \begin{axis}[
            width=0.45\textwidth,
            height=0.3\textwidth,
            xmajorgrids,
            ymajorgrids,
            xlabel=$T$,
            ylabel=Median SE (bits/s/Hz),
            xmin=5,
            xmax=15,
            ymin=0,
            ymax=8,
            legend columns=1,
            legend style = {at={(1,0.22)},
            anchor={south east},
            legend cell align=left,
            nodes={scale=0.7, transform shape}}]

            \addplot [color=black, line width=2.0pt]
              table[row sep=crcr]{%
            5   6.1965\\
            6	6.1965\\
            7	6.1965\\
            8	6.1965\\
            9	6.1965\\
            10 6.1965\\
            11 6.1965\\
            12 6.1965\\
            13 6.1965\\
            14 6.1965\\
            15 6.1965\\
            };
            \addlegendentry{Perfect AAI Mitigation}

            \addplot [color=blue, line width=2.0pt, mark=diamond, mark options={solid, blue}, mark phase=2, mark repeat=6]
              table[row sep=crcr]{%
            5	3.8442\\
            6	5.0380\\
            7	5.3585\\
            8	5.5009\\ 
            9	5.6544\\
            10	5.7992\\
            11 5.8578\\
            12 5.8731\\
            13 5.8952\\
            14 5.9064\\
            15 5.9269\\
            };
            \addlegendentry{Proposed (Full-Rank $\A$)}
    
            \addplot [color=red, line width=2.0pt, mark=triangle, mark options={solid}, mark phase=2, mark repeat=6]
              table[row sep=crcr]{%
            5	1.4172\\
            6	1.4172\\
            7	1.4172\\
            8	1.4172\\
            9	1.4172\\
            10	1.4172\\
            11 1.4172\\
            12 1.4172\\
            13 1.4172\\
            14 1.4172\\
            15 1.4172\\
            };
            \addlegendentry{Proposed (Rank-Deficient $\A$)}   

            \addplot [color=black, dashed, line width=2.0pt]
              table[row sep=crcr]{%
            5	0.8572\\
            6	0.8572\\
            7	0.8572\\
            8	0.8572\\
            9	0.8572\\
            10	0.8572\\
            11 0.8572\\
            12 0.8572\\
            13 0.8572\\
            14 0.8572\\
            15 0.8572\\
            };
            \addlegendentry{No AAI Mitigation}

            \addplot [color=green!70!black, line width=2.0pt, mark=square, mark options={solid}, mark phase=2, mark repeat=6]
              table[row sep=crcr]{%
            5	0.5064\\
            6	0.5064\\
            7	0.5064\\
            8	0.5064\\ 
            9	0.5064\\
            10	0.5064\\
            11 0.5064\\
            12 0.5064\\
            13 0.5064\\
            14 0.5064\\
            15 0.5064\\
            };
            \addlegendentry{Zero-Forcing Combining}

        \end{axis}
    \end{tikzpicture}
    \caption{
    SEs of the proposed methods and the baselines for different $T$.
    }
    \label{Fig:SE_var_T}
\end{figure}

Next, we derive achievable UL SEs for our proposed methods and benchmark these against an existing AAI mitigation method with zero-forcing combining. The genie-aided method and the method without AAI mitigation will serve as baselines.

\subsubsection{Proposed Method for Full-Rank $\A$}
\label{SE_proposed}

The least-squares solution in \eqref{lszhat} can be written as
\begin{align}
    \hat\z = \left( \A^H \A \right)^{-1} \A^H \y = 
    \begin{bmatrix}
        \x
        \\
        \h
    \end{bmatrix}
    + \left( \A^H \A \right)^{-1} \A^H \w.
\end{align}
Thus, for any symbol $s \in \{1,\dots,TK\tau\}$ of $\x$, the estimate is
\begin{align}
    \hat{x}_s = x_s + \aaa_s^H \w,
\end{align}
where $\aaa_s^H$ is the $s$th row of $( \A^H \A )^{-1} \A^H$. Note that all $\{ \aaa_s \}$ depend on the matrices $\{ \G_t \}$, $\{ \V_t \}$, and $\{ \SSSS_t \}$, which are known at the UL AP. Hence, we can treat these matrices as side information and use \cite[Eq. (2.48)]{ngo16} to obtain the following achievable UL SE:
\begin{align}
    \nonumber
    & \EEE \Bigg\{ \log_2 \left( 1 + \dfrac{1}{\var \{ \aaa_s^H \w | \{ \G_t \}, \{ \SSSS_t \}, \{ \V_t \} \}} \right) \Bigg\}
    \\
    = & \EEE \Bigg\{ \log_2 \left( 1 + \dfrac{1}{\| \aaa_s \|^2} \right) \Bigg\},
\end{align}
where the expectation is taken over $\{ \G_t \}$, $\{ \SSSS_t \}$, and $\{ \V_t \}$.

\subsubsection{Proposed Method for Rank-Deficient $\A$}

From \eqref{zBarHat}, the data symbol estimate for any $s \in \{ 1,\dots,K(\tau-J) \}$ is
\begin{align}
    \hat{\mybar{x}}_s = \mybar{x}_s + \mybar{\aaa}_s^H \w,
\end{align}
where $\mybar{\aaa}_s^H$ is row $s$ of $( \mybar\A^H \mybar\A )^{-1} \mybar\A^H$. By following the same reasoning as in Section \ref{SE_proposed}, an achievable UL SE is
\begin{align}
     \dfrac{\tau-J}{\tau} \EEE \Bigg\{ \log_2 \left( 1 + \dfrac{1}{\| \mybar{\aaa}_s \|^2} \right) \Bigg\},
\end{align}
where the pre-log factor accounts for the UL UEs being silent in $J$ samples of each UE-AP coherence interval. The expectation is taken over $\G$ and $\SSSS$.

\subsubsection{Genie-Aided Baseline}

For any $s \in \{ 1,\dots,TK\tau \}$, the data symbol estimate from \eqref{eq:xg} is
\begin{align}
    \hat{x}_{\text{g},s} = x_s + \dot{\aaa}_s^H \w,
\end{align}
where $\dot{\aaa}_s^H$ is row $s$ of $( \dot\A^H \dot\A )^{-1} \dot\A^H$. An achievable SE is
\begin{align}
     \EEE \Bigg\{ \log_2 \left( 1 + \dfrac{1}{\| \dot{\aaa}_s \|^2} \right) \Bigg\},
\end{align}
where the expectation is taken over $\{ \G_t \}$.

\subsubsection{Baseline Without AAI Mitigation}

From \eqref{eq:xhatn}, the data symbol estimate for any $s \in \{ 1,\dots,TK\tau \}$ is
\begin{align}
    \hat{x}_{\text{n},s} = x_s + \bb^H_s \h + \dot{\aaa}_s^H \w,
\end{align}
where $\bb_s^H$ is the $s$th row of the matrix $\B$ defined in \eqref{eq:xhatn}. The conditional variance of the effective noise term is
\begin{align}
    \var \{ \bb_s^H \h + \dot{\aaa}_s^H \w | \{ \G_t \}, \{ \SSSS_t \}, \{ \V_t \} \} = \alpha \| \bb_s \|^2 + \| \dot{\aaa}_s \|^2.
\end{align}
With the same reasoning as before, an achievable SE is
\begin{align}
     \EEE \Bigg\{ \log_2 \left( 1 + \dfrac{1}{\alpha \| \bb_s \|^2 + \| \dot{\aaa}_s \|^2} \right) \Bigg\},
\end{align}
where the expectation is taken over $\{ \G_t \}$, $\{ \SSSS_t \}$, and $\{ \V_t \}$.

\subsubsection{Zero-Forcing Combining for AAI Mitigation}

As a benchmark we consider zero-forcing combining at the UL AP as in \cite{daSilva2021WC}, which is the existing method closest to ours. This method requires AP-AP channel knowledge and hence an overhead for pilot transmission between the APs. We use $M$ samples to transmit orthogonal pilot sequences from the DL AP to the UL AP, and apply standard minimum mean-square error channel estimation to obtain an estimate $\hat\HH$ of $\HH$ \cite[Sec. 3.1]{ngo16}. Define the estimation error $\tilde\HH = \HH - \hat\HH$, and write the received signal \eqref{Yt} as (we drop the index $t$ here since we work in a single coherence interval) 
\begin{align}
    \Y =
    \begin{bmatrix}
        \G & \hat\HH \V
    \end{bmatrix}
    \begin{bmatrix}
        \X
        \\
        \SSSS
    \end{bmatrix}
    + \tilde\HH \V \SSSS + \W.
\end{align}
Then, the zero-forcing combining matrix is
\begin{align}
    \Z = \left(
    \begin{bmatrix}
        \G & \hat\HH \V
    \end{bmatrix}^H
    \begin{bmatrix}
        \G & \hat\HH \V
    \end{bmatrix}
    \right)^{-1}
    \begin{bmatrix}
        \G & \hat\HH \V
    \end{bmatrix}^H,
\end{align}
which gives
\begin{align}
    \label{eq:zf}
    \Z \Y = 
    \begin{bmatrix}
        \X
        \\
        \SSSS
    \end{bmatrix}
    + \underbrace{\Z (\tilde\HH \V \SSSS + \W)}_{\triangleq \N}.
\end{align}
Let $x_s$, $s \in \{ 1, \dots, K\tau \}$, be any data sample of $\X$ in \eqref{eq:zf}. Then, the estimate of $x_s$ can be expressed as
\begin{align}
    \hat{x}_s = x_s + n_s,
\end{align}
where $n_s$ is the $s$th element of the matrix $\N$ defined in \eqref{eq:zf}. An achievable SE is given by
\begin{align}
    \label{eq:SE_ZF}
    \dfrac{\tau - M/T}{\tau} \EEE \Bigg\{ \log_2 \left( 1 + \dfrac{1}{\var \{ n_s | \G , \SSSS , \V \}} \right) \Bigg\},
\end{align}
where the expectation is taken over $ \G $, $ \SSSS $, and $\V$. 
Note that we extend \cite{daSilva2021WC} by leveraging the slow time-variation of the AP-AP channel, such that the $M$ pilot samples can be spread over $T$ UE-AP coherence intervals. This reduces the pilot overhead to $M/T$ pilot samples per UE-AP coherence interval (on average), which is reflected in the pre-log factor of \eqref{eq:SE_ZF}. 

\subsubsection{Numerical Evaluation}

In Fig. \ref{Fig:SE_var_T}, we evaluate the median average SE obtained by our proposed methods, the two baselines, and the zero-forcing combining technique, for $T \in \{ 5, \dots, 15 \}$. The average is taken over the $K$ UL UEs and the median is taken over 100 independent UE deployments. We make the following observations:
\begin{itemize}
    \item The behavior of the SE is similar to the behavior of the MSE, both for the proposed methods and for the two baselines. Our proposed method for the full-rank case heavily outperforms the proposed method for the rank-deficient case and the baseline without AAI mitigation, and its SE approaches that of the genie-aided baseline when $T$ grows.
    \item In terms of SE, the proposed method for the rank-deficient case still outperforms the baseline without AAI mitigation, even though it suffers from a reduced pre-log factor due to the UL UEs being silent during $J$ samples.
    \item The zero-forcing combining benchmark performs worse than the baseline without AAI mitigation. This is because the zero-forcing combining matrix requires $M \geq K + J$ to be able to separate the $K$ UL UEs and cancel the $J$ AAI streams of the effective AP-AP channel $\HH \V$. Thus, with $M = K + J$ as in Fig. \ref{Fig:SE_var_T}, too many degrees of freedom are spent on AAI mitigation to allow for accurate data detection.
\end{itemize}
To obtain a fair comparison between our proposed methods and the zero-forcing benchmark, we now evaluate the SE for a varying number of AP antennas $M = N \in \{ 16, \dots, 32 \}$ for a fixed number of $K = J = 8$ UL/DL UEs. In Fig. \ref{Fig:SE_var_MN}, we compare the zero-forcing benchmark with $T=1$ and $T=10$ to our proposed method for the full-rank case with $T=10$ and to our proposed method for the rank-deficient case.\footnote{Note that our theoretical analysis does not guarantee full-rank $\A$ here; however, numerical evaluations show that with $T=10$, $\A$ has full rank for all setups considered in Fig. \ref{Fig:SE_var_MN}.} The two baselines are also included. We observe the following:
\begin{itemize}
    \item With $T=10$, our proposed method performs close to the genie-aided baseline, and outperforms the benchmark with zero-forcing combining. This is especially prominent when the number of AP antennas is small, since our proposed method (unlike the zero-forcing benchmark) does not sacrifice spatial dimensions for AAI mitigation, and hence does not require the number of UL AP antennas to be much larger than the number of UEs.
    \item Our proposed method for the rank-deficient case outperforms the zero-forcing benchmark with $T=1$ when the number of AP antennas is either small or large. When increasing the number of AP antennas further, the performance of the zero-forcing benchmark will continue to decrease due to the increased channel estimation overhead which reduces the pre-log factor.
    \item Except for when the number of AP antennas is smallest (where $M = K+J$ leaves too few spatial dimensions for zero-forcing), the zero-forcing benchmark with $T=10$ outperforms our proposed method for the rank-deficient case. This indicates that zero-forcing combining might be a better fallback option in scenarios where $T>1$, but not sufficiently large to make the proposed method for the full-rank case applicable.
\end{itemize}

In conclusion, our proposed methods achieve gains over the zero-forcing benchmark in terms of the UL SE in most cases. We end with a remark on the DL performance.
\begin{remark}
    We restrict the simulations to the UL SE. However, our proposed method also improves the DL performance over existing methods, since we overcome the AP-AP channel estimation overhead which increases the pre-log factor of the DL SE. Our proposed method for the full-rank case increases the pre-log factor of the DL SE by $M/T$ samples per UE-AP coherence interval, whereas the DL SINR is the same.
\end{remark}

\begin{figure}[!t]
    \centering
    \begin{tikzpicture}
         \begin{axis}[
            width=0.45\textwidth,
            height=0.3\textwidth,
            xmajorgrids,
            ymajorgrids,
            xlabel={$M=N$},
            ylabel=Median SE (bits/s/Hz),
            xmin=16,
            xmax=32,
            ymin=0,
            ymax=8,
            legend columns=1,
            legend style = {at={(1,1)},
            anchor={south east},
            legend cell align=left,
            nodes={scale=0.7, transform shape}}]

            \addplot [color=black, line width=2.0pt]
              table[row sep=crcr]{%
            16	5.6885\\
            20	6.5295\\
            24	6.7988\\
            28	7.1318\\
            32	7.4114\\
            };
            \addlegendentry{Perfect AAI Mitigation}

            \addplot [color=blue, line width=2.0pt, mark=diamond, mark options={solid, blue}, mark phase=2, mark repeat=2]
              table[row sep=crcr]{%
            16	5.5410\\
            20  6.2033\\
            24	6.5156\\
            28	6.7553\\ 
            32	6.9182\\
            };
            
            \addlegendentry{Proposed (Full-Rank $\A$, $T=10$)}
            
            \addplot [color=green!70!black, line width=2.0pt, mark=o, mark options={solid}, mark phase=3, mark repeat=4]
              table[row sep=crcr]{%
            16	1.1487\\
            20	4.5233\\
            24	5.5735\\
            28	5.9259\\
            32	6.2504\\
            };
            \addlegendentry{Zero-Forcing Combining ($T=10$)}
            
            \addplot [color=red, line width=2.0pt, mark=triangle, mark options={solid}, mark phase=2, mark repeat=2]
              table[row sep=crcr]{%
            16	2.0539\\
            20	2.4138\\
            24	2.7870\\
            28	2.9142\\ 
            32	3.0773\\
            };
            \addlegendentry{Proposed (Rank-Deficient $\A$)}   

            \addplot [color=green!70!black, line width=2.0pt, mark=square, mark options={solid}, mark phase=3, mark repeat=4]
              table[row sep=crcr]{%
            16	0.8070\\
            20	2.8271\\
            24	3.0444\\
            28	2.7621\\ 
            32	2.4040\\
            };
            \addlegendentry{Zero-Forcing Combining ($T=1$)}

            \addplot [color=black, dashed, line width=2.0pt]
              table[row sep=crcr]{%
            16	0.9040\\
            20	1.1231\\
            24	1.3965\\
            28	1.5796\\ 
            32	1.6746\\
            };
            \addlegendentry{No AAI Mitigation}

        \end{axis}
    \end{tikzpicture}
    \caption{
    SEs of the proposed methods and the baselines for different $M=N$.
    }
    \label{Fig:SE_var_MN}
\end{figure}

\vspace{-3mm}
\section{Conclusions and Future Directions}

We conclude that slowly time-varying AP-AP channels can be effectively leveraged to enhance interference mitigation in dynamic TDD. To this end, we have proposed a method for AP-AP interference mitigation that uses a least-squares formulation to jointly estimate the UL data and the AP-AP interference channel over multiple coherence intervals, during which the interference channel stays constant. We observed that this problem does not always have a unique solution, and provided an analytical bound on how slowly the AP-AP channel must vary to make the problem uniquely solvable. We solved the formulated least-squares problem in the full-rank case, and proposed a modification of the original problem that has a unique solution in the rank-deficient case.

Numerical simulations showed that our proposed method for the full-rank case significantly outperforms a baseline without interference mitigation and a benchmark zero-forcing receiver, in all relevant cases. Further, it achieves a similar performance as a genie-aided baseline with perfect interference mitigation for very slowly time-varying AP-AP channels.

We must leave the problem of calculating the exact rank of the regression matrix in the general scenario in Theorem \ref{theorem:dimNA} as an open question. Other directions for future work could be to develop improved methods for the rank-deficient case, and to extend our proposed method for the full-rank case to overcome the strict block-fading assumption used herein. A further direction is to extend the batch least-squares solution to an online recursive variant that decodes each coherence interval as it arrives, thereby reducing the latency caused by jointly processing multiple coherence intervals.

\appendices

\renewcommand{\thesectiondis}[2]{\Alph{section}:}
\section{Proof of Theorem \ref{theorem:dimNA}}
\label{appendix:dimNA}

We will split the proof into two parts. The first part proves \eqref{eq:rankA}, and the second part derives \eqref{eq:boundT} by building upon the results from part one. The proof of \eqref{eq:rankA} can be simplified by considering the leftmost $TM\tau \times TK\tau$ block of $\A$, which clearly has full-rank. However, we provide a proof including some additional details that are not needed to prove \eqref{eq:rankA}, but necessary later to derive \eqref{eq:boundT}.

\subsubsection{Proof of \eqref{eq:rankA}}

We define the matrix $\PP_t = \SSSS_t^T \V_t^T$. For notational convenience, we break the convention that bold lowercase symbols are column vectors. Instead, denote by $\p_t^{(i)} \in \C^{1 \times N},\ i \in \{ 1,\dots,\tau \},$ row $i$ of $\PP_t$, and let $p_t^{(i,j)},\ i \in \{ 1,\dots,\tau \},\ j \in \{ 1,\dots,N \},$ be element $(i,j)$ of $\PP_t$ or equivalently element $j$ of $\p_t^{(i)}$. Then we have
\begin{align}
    \label{Abig}
    \A =
    \begin{bmatrix}
        \G_1 & & & & \0 & \p_1^{(1)} \otimes \II_M
        \\
        & \ddots & & & & \vdots
        \\
        & & \G_t & & & \p_t^{(i)} \otimes \II_M
        \\
        & & & \ddots & & \vdots
        \\
        \0 & & & & \G_T & \p_T^{(\tau)} \otimes \II_M
    \end{bmatrix}.
\end{align}
The matrix $\A$ consists of $T\tau$ row blocks each with $M$ rows. By excluding all columns consisting only of zeros, each block can be written as
\begin{align}
    \label{BlockWise}
    \begin{bmatrix}
        \G_t & \p_t^{(i)} \otimes \II_M
    \end{bmatrix}
    =
    \begin{bmatrix}
        \G_t & p_t^{(i,1)} \II_M & \dots & p_t^{(i,N)} \II_M
    \end{bmatrix}.
\end{align}
We now perform row operations on each block independently. The columns with only zeros are not affected by row operations; thus we can work with the structure in \eqref{BlockWise}. From Assumption \ref{assum:VS}, it is possible to perform a sequence of row operations on $\G_t$ to obtain a similar matrix $\check\G_t$ on row echelon form with pivot elements in the first $K$ rows:
\begin{align}
    \label{RowG}
    \G_t \overset{\text{row ops.}}{\sim}
    \check\G_t \triangleq
    \;\begin{bNiceMatrix}[name=B,last-col=4]
        \times & \dots & \times &  \\
        0 & \ddots & \vdots & \quad K \text{ rows} \\
        \vdots & \ddots & \times & \\[2mm]
        0 & \dots & 0 & \\[-1mm]
        \vdots & \ddots & \vdots & \quad M-K \text{ rows} \\
        0 & \dots & 0 & \\
        \CodeAfter
        \SubMatrix.{1-3}{3-3}\}[extra-height=0mm,xshift=1mm]
        \SubMatrix.{4-3}{6-3}\}[extra-height=0mm,xshift=1mm]
    \end{bNiceMatrix}\;
\end{align}
Here, the symbol "$\times$" denotes an arbitrary non-zero element. When performing these row operations on the full block in \eqref{BlockWise}, the remaining part $\p_t^{(i)} \otimes \II_M$ is   affected in the following way:
\begin{align}
    \nonumber
    \p_t^{(i)} \otimes \II_M & =
    \begin{bmatrix}
        p_t^{(i,1)} \II_M & \dots & p_t^{(i,N)} \II_M
    \end{bmatrix}
    \\
    & \overset{\text{row ops.}}{\sim} 
    \begin{bmatrix}
        p_t^{(i,1)} \check\Q_t & \dots & p_t^{(i,N)} \check\Q_t
    \end{bmatrix}		
    = \p_t^{(i)} \otimes \check\Q_t,
\end{align}
where $\check\Q_t \in \C^{M \times M}$ is the following lower triangular matrix:
\begin{align}
    \check\Q_t =
    \;\begin{bNiceMatrix}[name=BB]
        1 & 0 & \dots & \dots & \dots & \dots & 0 \\
        \times & \ddots & \ddots & & & & \vdots \\
        \vdots & \ddots & 1 & 0 & \dots & \dots & 0 \\
        \times  & \dots & \times & 1 & 0 & \dots & 0  \\
        \vdots & & \vdots & 0 & \ddots & \ddots & \vdots \\
        \vdots &  & \vdots & \vdots & \ddots & \ddots & 0 \\
        \times & \dots & \times & 0 & \dots & 0 & 1 \\
        \CodeAfter
        \UnderBrace[shorten, yshift=1mm]{7-4}{7-7}{M-K \text{ columns}}
        \UnderBrace[shorten, yshift=1mm]{7-1}{7-3}{K \text{ columns}}
    \end{bNiceMatrix}\;.
\end{align}
\vspace{+3mm}

As before, the elements "$\times$" denote arbitrary non-zero elements. Now, after performing the sequence of row operations that obtains \eqref{RowG}, the full block \eqref{BlockWise} has the following form:
\begin{align}
    \label{RowBlock}
    \begin{bmatrix}
        \G_t & \p_t^{(i)} \otimes \II_M
    \end{bmatrix}
    \overset{\text{row ops.}}{\sim}
    \begin{bmatrix}
        \check\G_t & \p_t^{(i)} \otimes \check\Q_t
    \end{bmatrix},
\end{align}
and the full matrix in \eqref{Abig} becomes
\begin{align}
    \label{RowA}
    \A \overset{\text{row ops.}}{\sim} \check\A \triangleq
    \begin{bmatrix}
        \II_{\tau}\otimes \check\G_1 & & \0 & \PP_1 \otimes \check\Q_{1}
        \\
        & \ddots & & \vdots
        \\
        \0 & & \II_{\tau}\otimes \check\G_T & \PP_T \otimes \check\Q_T
    \end{bmatrix}.
\end{align}
We note that $\check\A$ has $T\tau$ blocks of the form \eqref{RowBlock}, each with $K$ pivot elements. Hence, there are $TK\tau$ pivot elements in total, and the rank of $\A$ is at least $TK\tau$. It remains to consider the submatrix with the remaining $M-K$ rows from each of the $T\tau$ row blocks and compute its rank. We collect these $T\tau(M-K)$ rows (excluding columns with zeros) in the matrix
\begin{align}
    \label{eq:RR}
    \RR \triangleq
    \begin{bmatrix}
        \PP_1 \otimes \Q_1
        \\
        \vdots
        \\
        \PP_T \otimes \Q_T
    \end{bmatrix}
     \in \C^{T\tau(M-K) \times MN},
\end{align}
where $\Q_t \in \C^{(M-K) \times M}$ is the bottom $M-K$ rows of $\check\Q_t$. Specifically, $\Q_t$  can be expressed as
\begin{align}
    \label{Q}
    \Q_t \triangleq
    \begin{bmatrix}
        \mybar\Q_t & \II_{M-K}
    \end{bmatrix},
\end{align}
where $\mybar\Q_t \in \C^{(M-K) \times K}$ is the bottom left block of $\check\Q_t$. Now, with $R \triangleq \rank\left( \RR \right)$, we can express the rank of $\A$ as
\begin{align}
    \label{eq:rankA_R}
    \rank\left( \A \right) = TK\tau + R,
\end{align}
with $R \leq MN$, and the first part of the proof is completed.

\subsubsection{Proof of \eqref{eq:boundT}}
Now, take a set of row vectors spanning the row space of $\PP_t$, say its first $J$ rows, and collect these in
\begin{align}
    \mybar\PP_t \triangleq
    \begin{bmatrix}
        \p_t^{(1)}
        \\
        \vdots
        \\
        \p_t^{(J)}
    \end{bmatrix}
    \in \C^{J \times N}.
\end{align}
Then, the row space of $\PP_t \otimes \Q_t$ is spanned by the rows of $\mybar\PP_t \otimes \Q_t$. Thus, $R = \rank(\RR) = \rank(\check\RR)$, where
\begin{align}
    \label{eq:checkRR}
    \check\RR \triangleq
    \begin{bmatrix}
        \mybar\PP_1 \otimes \Q_1
        \\
        \vdots
        \\
        \mybar\PP_T \otimes \Q_T
    \end{bmatrix}
    \in \C^{T(M-K)J \times MN}.
\end{align}
It follows that
\begin{align}
    \label{eq:R_Lemma}
    R = \rank(\check\RR) \leq \min \{ T(M-K)J,\ MN \}.
\end{align}
From \eqref{eq:dimNA} and \eqref{eq:R_Lemma} we get
\begin{align}
    \nonumber
    \dim(\NN(\A)) & = MN - R
    \\
    & \geq MN - \min \{ T(M-K)J,\ MN \}.
\end{align}
Hence, the dimension of the nullspace can only equal zero if $\check\RR$ is wide, i.e., if $T(M-K)J \geq MN$, which gives \eqref{eq:boundT}.

\vspace{-3mm}
\renewcommand{\thesectiondis}[2]{\Alph{section}:}
\section{Proof of Theorem \ref{theorem:M=N=2K=2J}}
\label{appendix:SC_MN_2KJ}

With $M=N=2K=2J$, $\A$ and $\check\RR$ have the following dimensions (expressed in $K$):
\begin{align}
    & \A \in \C^{2TK\tau \times (TK\tau + 4K^2)}
    \\
    \label{eq:barR_Case2}
    & \check\RR \in \C^{TK^2 \times 4K^2}.
\end{align}
By inserting \eqref{Q} into $\check\RR$ in \eqref{eq:checkRR} and permuting its columns we get the following matrix $\mybar\RR$ which is similar to $\check\RR$:
\begin{align}
    \label{eq:barR}
    \check\RR \overset{\text{col. ops.}}{\sim} \mybar\RR \triangleq
    \begin{bmatrix}
        \mybar\PP_1 \otimes \mybar\Q_1 & \mybar\PP_1 \otimes \II_{K}
        \\
        \vdots & \vdots
        \\
        \mybar\PP_T \otimes \mybar\Q_T & \mybar\PP_T \otimes \II_{K}
    \end{bmatrix}.
\end{align}
Thus,
\begin{align}
    \label{eq:R}
    R = \rank(\RR) = \rank(\check\RR) = \rank(\mybar\RR).
\end{align}
Since $\A$ is tall (or square) it has full rank if $\rank(\A) = TK\tau + 4K^2$. From \eqref{eq:rankA_R} and \eqref{eq:R}, this is equivalent to $R = \rank(\mybar\RR) = 4K^2$. Further, we have $\rank(\mybar\RR) \leq \min \{ TK^2,\ 4K^2 \}$, and it follows that $\rank(\mybar\RR) = 4K^2$ is only obtainable  $TK^2 \geq 4K^2 \iff T \geq 4$.

It remains to prove that $\rank(\mybar\RR) = 4K^2$ is always fulfilled if $T \geq 4$. To this end, it is sufficient to show that $\mybar\RR$ is full-rank for $T=4$. Then it follows that $\mybar\RR$ is full-rank for all $T \geq 4$; adding more rows to a matrix cannot decrease its rank. With $T=4$, $\mybar\RR$ has the following structure:
\begin{align}
    \label{eq:barR_T4}
    \mybar\RR =
    \begin{bmatrix}
        \mybar\PP_1 \otimes \mybar\Q_1 & \mybar\PP_1 \otimes \II_{K}
        \\
        \mybar\PP_2 \otimes \mybar\Q_2 & \mybar\PP_2 \otimes \II_{K}
        \\
        \mybar\PP_3 \otimes \mybar\Q_3 & \mybar\PP_3 \otimes \II_{K}
        \\
        \mybar\PP_4 \otimes \mybar\Q_4 & \mybar\PP_4 \otimes \II_{K}
    \end{bmatrix}
    \in \C^{4K^2 \times 4K^2}.
\end{align}
Since $\mybar\RR$ is a square matrix, it has full-rank if and only if its determinant is non-zero. We will prove that it is non-zero with probability one by showing that $\mybar\RR$ can be written as a polynomial in independent random variables. 
To this end, first consider $\det(\mybar\RR)$, which can be written as a sum of $(4K^2)!$ terms, where each term is a (negative or positive) product of $4K^2$ elements of $\mybar\RR$. Each term is formed such that there is only one element from each row and each column included in the product \cite[Sec. 0.3.2]{horn2012matrix}. This is a polynomial, say $p(\cdot)$, in the entries of $\{ \mybar\PP_t \}$ and $\{ \mybar\Q_t \}$, $t \in \{ 1, \dots, 4 \}$. Define
\begin{align}
    \PP =
    \begin{bmatrix}
        \mybar\PP_1
        \\
        \mybar\PP_2
        \\
        \mybar\PP_3
        \\
        \mybar\PP_4
    \end{bmatrix}
    ,\
    \Q =
    \begin{bmatrix}
        \mybar\Q_1
        &
        \mybar\Q_2
        &
        \mybar\Q_3
        &
        \mybar\Q_4
    \end{bmatrix}.
\end{align}
Then $p(\cdot)$ is a polynomial in the entries of these matrices, i.e., $\det(\mybar\RR) = p(\PP, \Q)$. This polynomial can be partitioned as
\begin{align}
    \label{eq:detRbar}
    p(\PP, \Q) & = 
    f(\PP, \Q)
    \det \left(
    \begin{bmatrix}
        \mybar\PP_3 \otimes \II_{K}
        \\
        \mybar\PP_4 \otimes \II_{K}
    \end{bmatrix}
    \right) + r(\PP, \Q),
\end{align}
with  
\begin{align}
    \label{eq:f}
    & f(\PP, \Q) = \prod_{k=1}^K \left( \det\left( \mybar{p}_1^{(k,k)} \mybar\Q_1 \right) \det\left(\mybar{p}_2^{(k,k+K)} \mybar\Q_2 \right) \right),
\end{align}
where $\mybar{p}_t^{(i,j)}$ is element $(i,j)$ of $\mybar\PP_t$. The remaining term $r(\PP, \Q)$ consists of the terms not captured by the first term of \eqref{eq:detRbar}. Note that $f(\PP, \Q)$ is not the zero polynomial (i.e., $f(\PP, \Q) \not\equiv 0$), since all factors in \eqref{eq:f} are non-zero. Furthermore, we have
\begin{align}
    \label{eq:detPI}
    \nonumber
    & \det \left(
    \begin{bmatrix}
        \mybar\PP_3 \otimes \II_{K}
        \\
        \mybar\PP_4 \otimes \II_{K}
    \end{bmatrix}
    \right)
    =
    \det \left(
    \begin{bmatrix}
        \mybar\PP_3
        \\
        \mybar\PP_4
    \end{bmatrix}
     \otimes \II_{K}
    \right)
    \\
    & =
    \det \left(
    \begin{bmatrix}
        \mybar\PP_3
        \\
        \mybar\PP_4
    \end{bmatrix}
    \right)^K
    \det ( \II_K )^{2K}
    =
    \det \left(
    \begin{bmatrix}
        \mybar\PP_3
        \\
        \mybar\PP_4
    \end{bmatrix}
    \right)^K
    \not\equiv 0,
\end{align}
where in the second equality we used a well-known property for the determinant of a Kronecker product \cite[Sec. 12.1]{lancaster1985theory}. Hence, the product of \eqref{eq:f} and \eqref{eq:detPI} in \eqref{eq:detRbar} is not the zero polynomial. Further, the same product consists of all monomials in the expansion of the determinant of $\mybar\RR$ that contain elements from the $2K$ diagonal blocks $\{ \mybar{p}_1^{(k,k)} \mybar\Q_1 \}$ and $\{ \mybar{p}_2^{(k,k+K)} \mybar\Q_2 \}$ of the top-left $2K^2 \times 2K^2$ quarter of $\mybar\RR$, and no other elements from the left half of the same matrix. 

It follows that no monomials in the first part of the right-hand side of \eqref{eq:detRbar} can be canceled out analytically by any monomials in the second part, since all monomials in the second part contain at least one element from outside the diagonal blocks. Hence, $\det (\mybar\RR) = p(\PP,\Q) \not\equiv 0$, and there exist matrices $\PP^*$ and $\Q^*$ such that $p(\PP^*, \Q^*)\neq 0$.

Next, note that with $M=N=2K=2J$, we can explicitly express $\mybar\Q_t = -\G_{t,2} \G_{t,1}^{-1}$, where $\G_{t,1}$ and $\G_{t,2}$ are the upper and lower $K \times K$ blocks of $\G_t$, respectively. Also, $\mybar\PP_t = \mybar\SSSS_t^T \V_t^T$, where $\mybar\SSSS_t^T$ contains the first $K$ rows of $\SSSS_t^T$. Define
\begin{align}
    \PP & =
    \underbrace{
    \begin{bmatrix}
        \mybar\SSSS_1^T & & &
        \\
        & \mybar\SSSS_2^T & &
        \\
        & & \mybar\SSSS_3^T &
        \\
        & & & \mybar\SSSS_4^T
    \end{bmatrix}}_{\triangleq \SSSS}
    \underbrace{
    \begin{bmatrix}
        \V_1^T
        \\
        \V_2^T
        \\
        \V_3^T
        \\
        \V_4^T
    \end{bmatrix}}_{\triangleq \V}
    \\
    \Q & = -
    \underbrace{
    \begin{bmatrix}
        \G_{1,2}^T
        \\
        \G_{2,2}^T
        \\
        \G_{3,2}^T 
        \\
        \G_{4,2}^T
    \end{bmatrix}^T}_{\triangleq \G_2}
    \underbrace{
    \begin{bmatrix}
        \G_{1,1}^{-1} & & &
        \\
        & \G_{2,1}^{-1} & &
        \\
        & & \G_{3,1}^{-1} &
        \\
        & & & \G_{4,1}^{-1}
    \end{bmatrix}}_{\triangleq \G_1^{-1}};
\end{align}
thus, for the matrices $\PP^*, \Q^*$ introduced above there exist unique matrices $\V^* = \SSSS^{-1} \PP^*$ and $\G_2^* = - \Q^* \G_1$. Let $q(\cdot)$ be the polynomial defined by a linear transformation of $p(\cdot)$, such that $q(\V, \G_{2}) = p(\PP, \Q)$. Then,
\begin{align}
    \det(\mybar\RR) & = p(\PP, \Q) = p(\SSSS \V, -\G_2 \G_1^{-1})
    \\
    & = q(\V, \G_2) \not\equiv 0,
\end{align}
since $q(\V^*, \G_2^*) = p(\PP^*, \Q^*) \neq 0$, and it follows that the polynomial $q(\V, \G_2) \not\equiv 0$. Further, $q(\V, \G_2)$ is a polynomial in the independent continuous random variables (with a support on the entire complex plane) of $\V$ and $\G_2$, so the set of realizations for which it equals zero (e.g., if $\V^*$ and $\G_2^*$ happen to be zero matrices) has measure zero. Hence, with probability one, $R = \rank(\mybar\RR) = 4K^2$ and  $\A$ is full rank with probability one if and only if $T \geq 4$.

\renewcommand{\thesectiondis}[2]{\Alph{section}:}
\section{Proof of Theorem \ref{theorem:T=1}}
\label{appendix:SC_T1}

By following the same steps as in the proof of Theorem \ref{theorem:dimNA} (with $T=1$) we get that $\rank(\A) = K \tau + \rank(\RR)$, where the matrix $\RR$ in \eqref{eq:RR} is now given by $\RR = \SSSS_1^T \V_1^T \otimes \Q_1$. Thus, $\rank(\RR) = \rank(\SSSS_1^T \V_1^T) \rank(\Q_1) = J(M-K)$, and it follows that $\rank(\A) = K\tau + (M-K)J$.

Note that this theorem extends  \cite[Th. 1]{Andersson2024ICASSP}, which assumed $J=1$, to a general value of $J$. 

\renewcommand{\thesectiondis}[2]{\Alph{section}:}
\section{Proof of Proposition \ref{prop:covxt}}
\label{appendix:covxt}
The matrix $\A^H\A$ can be partitioned as
\begin{align}
    \A^H\A =
    \begin{bmatrix}
        \B & \CC
        \\
        \CC^H & \D
    \end{bmatrix},
\end{align}
where, with $\PP_t = \SSSS_t^T \V_t^T$, we have
\begin{align}
    \nonumber
    \B & \triangleq
    \begin{bmatrix}
        \II_{\tau}\otimes \G_1^H \G_1 & & \0
        \\
        & \ddots &
        \\
        \0 & & \II_{\tau}\otimes \G_T^H \G_T
    \end{bmatrix},
    \\
    \label{eq:blockdef}
    \CC & \triangleq
    \begin{bmatrix}
        \PP_1 \otimes \G_1^H
        \\
        \vdots
        \\
        \PP_T \otimes \G_T^H
    \end{bmatrix},\
    \D \triangleq \sum_{t\in \TT} (\PP_t^H \PP_t \otimes \II_M).
\end{align}
Then, by using the inversion lemma for block-partitioned matrices \cite[Sec. 0.7.3]{horn2012matrix}, we get
\begin{align}
    \nonumber
    & \Cov(\hat\x) = [(\A^H \A)^{-1}]_{1:TK\tau,1:TK\tau}
    \\
    & = \B^{-1} + \B^{-1} \CC (\D - \CC^H \B^{-1} \CC)^{-1} \CC^H \B^{-1}.
\end{align}
Define $\check\E \triangleq \B^{-1} \CC (\D - \CC^H \B^{-1} \CC)^{-1} \CC^H \B^{-1}$ and partition this matrix into $T \times T$ blocks, each of size $K\tau \times K\tau$:
\begin{align}
    \check\E =
    \begin{bmatrix}
        \check\E_{1,1} & \dots & \check\E_{1,T}
        \\
        \vdots & \ddots & \vdots
        \\
        \check\E_{T,1} & \dots & \check\E_{T,T}
    \end{bmatrix}.
\end{align}
For any $i,j \in \{ 1,\dots,T\}$, these blocks are given by
\begin{align}
    \check\E_{i,j} \triangleq (\PP_i \otimes (\G_i^H \G_i)^{-1} \G_i^H) \E (\PP_j^H \otimes \G_j (\G_j^H \G_j)^{-1}),
\end{align}
where
\begin{align}
    \nonumber
    \E & \triangleq (\D - \CC^H \B^{-1} \CC)^{-1}
    \\
    & = \left(\sum_{t \in \TT} \PP_t^H \PP_t \otimes (\II_M - \G_t (\G_t^H \G_t)^{-1} \G_t^H) \right)^{-1}.
\end{align}
By extracting the $t$th $K\tau \times K\tau$ diagonal block of $\Cov(\hat\x)$ and inserting $\PP_t = \SSSS_t^T \V_t^T$ we get \eqref{Covxt} which completes the proof. 

\renewcommand{\thesectiondis}[2]{\Alph{section}:}
\section{Proof of Theorem \ref{theorem:dimNAbar}}
\label{appendix:dimNAbar}
To show that $\mybar\A \in \C^{M\tau \times (K(\tau-J) + MJ)}$ is full-rank we must find $K(\tau-J) + MJ$ linearly independent columns. Partition $\mybar\A$ as $\mybar\A = \begin{bmatrix} \mybar\A_1 & \mybar\A_2 \end{bmatrix}$, where
\begin{align}
    \mybar\A_1 =
    \begin{bmatrix}
        \0_{MJ \times K(\tau - J)}
        \\
        \II_{\tau-J} \otimes \G
    \end{bmatrix}
    ,\
    \mybar\A_2 = \SSSS^T \otimes \II_{M}.
\end{align}
From Assumption \ref{assum:VS} the left-most block must be full-rank, i.e. $\rank(\mybar\A_1) = K(\tau-J)$. It remains to show that $MJ$ additional linearly independent columns can be found in $\mybar\A_2$. Since the first $MJ$ rows of $\mybar\A_1$ are zero vectors, it is sufficient to show that the corresponding part of $\mybar\A_2$, i.e. its upper-most $MJ \times MJ$ block, is full-rank. But this is clear since $\SSSS^T$ has i.i.d. elements and $\rank(\mybar\A_2) = \rank(\SSSS^T \otimes \II_{M}) = MJ$.

\renewcommand{\thesectiondis}[2]{\Alph{section}:}
\section{Proof of Theorem \ref{theorem:eqMethods}}
\label{appendix:eqMethods}

The proof follows the same lines as \cite[Th. 2]{Andersson2024ICASSP}, which considered the special case of $J=1$. With $\Y = \begin{bmatrix} \Y_{1:J} & \mybar\Y \end{bmatrix}$ and $\SSSS = \begin{bmatrix} \SSSS_{1:J} & \mybar\SSSS \end{bmatrix}$, the least-squares problem solved in Algorithm 1 (joint estimation) is, on matrix form,
\begin{align}
    \label{sc2:ls_matrix}
    \nonumber
    & \min_{ \mybar\X, \F} \| \Y - \G
    \begin{bmatrix}
        \0_{K \times J} & \mybar\X
    \end{bmatrix}
    - \F \SSSS \|^2
    \\
    & = \min_{ \mybar\X, \F} \{ \| \Y_{1:J} - \F \SSSS_{1:J} \|^2 + \| \mybar\Y - \G \mybar\X - \F \mybar\SSSS \|^2 \}.
\end{align}
From \eqref{sc2:ls_matrix}, the solution to $\mybar\X$ for a given matrix $\F$ is
\begin{align}
    \label{eq:hatXbar}
    \hat{\mybar\X} = (\G^H \G)^{-1} \G^H (\mybar\Y - \F \mybar\SSSS).
\end{align}
By inserting \eqref{eq:hatXbar} into \eqref{sc2:ls_matrix}, we obtain the remaining problem to minimize over $\F$:
\begin{align}
    \label{sc2:ls_F}
    \nonumber
    & \min_{\F} \{ \| \Y_{1:J} - \F \SSSS_{1:J} \|^2 
    \\
    \nonumber
    & \qquad + \| \mybar\Y - \G (\G^H \G)^{-1} \G^H (\mybar\Y - \F \mybar\SSSS) - \F \mybar\SSSS  \|^2 \}
    \\
    & = \min_{\F} \{ \| \Y_{1:J} - \F \SSSS_{1:J} \|^2 + \| \Pi_{\G}^{\perp} (\mybar\Y - \F \mybar\SSSS) \|^2 \},
\end{align}
where $\Pi_{\G}^{\perp} \triangleq \II_M - \G (\G^H \G)^{-1} \G^H$ is the projection matrix onto the orthogonal complement of the column space of $\G$. Next, decompose $\Y_{1:J} = \Y_{1:J}' + \Y_{1:J}''$ and $\F = \F' + \F''$ such that $\Y_{1:J}',\F' \in \mathrm{sp}(\G)$ and $\Y_{1:J}'',\F'' \in \mathrm{sp}(\G)^{\perp}$, where $\mathrm{sp}(\G)$ respectively $\mathrm{sp}(\G)^{\perp}$ denote the span of the columns of $\G$ and its orthogonal complement. Now, we write \eqref{sc2:ls_F} as
\begin{align}
    \label{sc2:ls_F'F''}
    \nonumber
    & \min_{\F} \{ \| \Y_{1:J}' + \Y_{1:J}'' - \F' \SSSS_{1:J} - \F'' \SSSS_{1:J} \|^2 
    \\
    \nonumber
    & \qquad + \| \Pi_{\G}^{\perp} (\mybar\Y - \F' \mybar\SSSS - \F'' \mybar\SSSS) \|^2 \}
    \\
    \nonumber
    & = \min_{\F} \{ \| \Y_{1:J}' - \F' \SSSS_{1:J} \|^2 + \| \Y_{1:J}'' - \F'' \SSSS_{1:J} \|^2 
    \\
    & \quad \qquad + \| \Pi_{\G}^{\perp} (\mybar\Y - \F'' \mybar\SSSS) \|^2 \}.
\end{align}
Let the solution to \eqref{sc2:ls_F'F''} be $\hat\F_\star = \hat\F_\star'  + \hat\F_\star''$, with $\hat\F_\star' \in \mathrm{sp}(\G)$ and $\hat\F_\star'' \in \mathrm{sp}(\G)^{\perp}$. Solving \eqref{sc2:ls_F'F''} with respect to $\F'$ gives $\hat\F_\star' = \Y_{1:J}' \SSSS_{1:J}^H (\SSSS_{1:J} \SSSS_{1:J}^H)^{-1}$. Thus, we get
\begin{align}
    \label{eq:solXbar}
    \nonumber
    \hat{\mybar\X} & = (\G^H \G)^{-1} \G^H (\mybar\Y - \hat\F_\star \mybar\SSSS) = (\G^H \G)^{-1} \G^H (\mybar\Y - \hat\F_\star' \mybar\SSSS)
    \\
    \nonumber
    & = (\G^H \G)^{-1} \G^H (\mybar\Y - \Y_{1:J}' \SSSS_{1:J}^H (\SSSS_{1:J} \SSSS_{1:J}^H)^{-1} \mybar\SSSS)
    \\
    \nonumber
    & = (\G^H \G)^{-1} \G^H (\mybar\Y - \Y_{1:J} \SSSS_{1:J}^H (\SSSS_{1:J} \SSSS_{1:J}^H)^{-1} \mybar\SSSS)
    \\
    & = (\G^H \G)^{-1} \G^H (\mybar\Y - \hat\F \mybar\SSSS),
\end{align}
where in the second and fourth equalities, respectively, we used that $\hat\F_\star'', \Y_{1:J}'' \in \mathrm{sp}(\G)^{\perp}$, and in the final step we recognized $\Y_{1:J} \SSSS_{1:J}^H (\SSSS_{1:J} \SSSS_{1:J}^H)^{-1}$ as the Algorithm~2 channel estimate $\hat\F$ in \eqref{eq:hatF}. Hence \eqref{eq:solXbar} is exactly the solution in \eqref{XBarHat_C2} with Algorithm 2, and the two solutions are equal.

\renewcommand{\thesectiondis}[2]{\Alph{section}:}
\renewcommand{\thesectiondis}[2]{\Alph{section}:}

\section{Algorithm and Complexity Analysis for the Full-Rank Case}
\label{appendix:complexity}

The unknown vector $\z = [\x_1^T, \dots, \x_T^T, \h^T]^T$ is partitioned to match the bordered block-diagonal structure of $\A$ in \eqref{A}.
As established in the proof of Proposition~\ref{prop:covxt}, the Gram matrix $\A^H \A$ then admits the $2 \times 2$ block partition with $\B$, $\CC$, and $\D$ defined in \eqref{eq:blockdef}, and the Schur complement of $\B$, i.e. $\E^{-1} = \D - \CC^H \B^{-1} \CC$, has the closed-form inverse $\E$ in \eqref{E}.

Rather than forming $(\A^H \A)^{-1}$, we solve the normal equations
$\A^H \A \hat\z = \A^H \y$ in block form,
\begin{align}
    \label{eq:normal_x}
    \B \hat\x + \CC \hat\h & = (\A^H \y)_{\x}, \\
    \label{eq:normal_h}
    \CC^H \hat\x + \D \hat\h & = \sum_{t \in \TT} (\PP_t^H \otimes \II_M) \y_t,
\end{align}
where $(\A^H \y)_{\x}$ is the $\x$-block of $\A^H \y$, with $t$th sub-block
$(\II_\tau \otimes \G_t^H)\y_t$. Eliminating $\hat\x = \B^{-1}((\A^H \y)_{\x}
- \CC \hat\h)$ from \eqref{eq:normal_x} and substituting into
\eqref{eq:normal_h} yields the reduced system whose left-hand side is the
Schur complement $\E^{-1}$; evaluating the right-hand side gives
\begin{align}
    \label{eq:linsys_h}
    \E^{-1} \hat\h = \mathbf{r} \triangleq \sum_{t \in \TT}
    \big( \PP_t^H \otimes (\II_M - \G_t (\G_t^H \G_t)^{-1} \G_t^H) \big) \y_t.
\end{align}
Once $\hat\h$ is known, back-substitution in the $t$th sub-block of
\eqref{eq:normal_x} gives the $T$ decoupled per-interval estimates
\begin{align}
    \label{eq:hatxt_alg}
    \hat\x_t = \big( \II_\tau \otimes (\G_t^H \G_t)^{-1} \G_t^H \big)
    \big( \y_t - (\PP_t \otimes \II_M) \hat\h \big).
\end{align}
Equation \eqref{eq:linsys_h} is a single $MN \times MN$ system for $\hat\h$ in
which the $T$ coherence intervals enter only through the sum $\mathbf{r}$; this
decoupling is what makes the cost linear in $T$.

We solve \eqref{eq:linsys_h} without ever forming $\E$: since $\E^{-1}$ is
available in closed form, we Cholesky-factorize it once as $\E^{-1} =
\mathbf{L} \mathbf{L}^H$, with $\mathbf{L}$ lower-triangular. Introducing the
auxiliary vector $\mathbf{c} \triangleq \mathbf{L}^H \hat\h$, the system
$\mathbf{L} \mathbf{L}^H \hat\h = \mathbf{r}$ decouples into two triangular
sub-systems, $\mathbf{L} \mathbf{c} = \mathbf{r}$ and $\mathbf{L}^H \hat\h =
\mathbf{c}$. We first recover $\mathbf{c}$ by \emph{forward substitution}:
since $\mathbf{L}$ is lower-triangular, its $i$th equation reads
$L_{ii} c_i = r_i - \sum_{j<i} L_{ij} c_j$, so sweeping $i$ from $1$ to $MN$
yields $\mathbf{c}$ at cost $\OO((MN)^2)$. We then recover $\hat\h$ by
\emph{backward substitution}: since $\mathbf{L}^H$ is upper-triangular, its
$i$th equation reads $L^*_{ii} \hat h_i = c_i - \sum_{j>i} L^*_{ji} \hat h_j$,
so sweeping $i$ from $MN$ down to $1$ yields $\hat\h$ at the same cost. The
factorization itself costs $\OO((MN)^3)$ \cite[Sec.~4.2.5]{golub2013matrix}
and dominates the two triangular solves. Since $\E^{-1}$ has size $MN \times MN$, independent of $T$, this factorization cost is
incurred only once, no matter how large $T$ grows.

The remaining work scales linearly with $T$, since it consists of sums over
the $T$ intervals: assembling $\E^{-1}$ and the right-hand side $\mathbf{r}$
of \eqref{eq:linsys_h}, and applying \eqref{eq:hatxt_alg} once per interval.
Per interval, assembling $\E^{-1}$ costs $\OO(N^2\tau)$ for $\PP_t^H \PP_t$,
$\OO(M^2 K)$ for the projector $\II_M - \G_t (\G_t^H \G_t)^{-1} \G_t^H$, and
$\OO((MN)^2)$ for the Kronecker product. The terms of $\mathbf{r}$ and of
\eqref{eq:hatxt_alg} are evaluated without materializing any Kronecker matrix,
via the identity \eqref{veckron};
for instance, reshaping
$\y_t$ into $\mathbf{Y}_t \in \C^{M\times\tau}$, the $t$th term of $\mathbf{r}$
equals $\mathrm{vec}\big((\II_M - \G_t (\G_t^H \G_t)^{-1} \G_t^H) \mathbf{Y}_t
\PP_t^*\big)$, i.e. two matrix products at $\OO(M^2\tau + MN\tau)$. The
per-interval cost is thus $\OO((MN)^2 + (M+N)^2\tau)$, absorbing lower-order
terms in $K \leq M,N$.

Summing over the $T$ intervals, the dominant total cost is
\begin{align}
    \label{eq:complexity_total}
    \OO\Big( T \cdot \big( (MN)^2 + (M+N)^2 \tau \big) + (MN)^3 \Big),
\end{align}
i.e.\ linear in $T$ plus a one-time $\OO((MN)^3)$ cost for the AP-AP channel
system. By contrast, since $\z$ has $TK\tau + MN$ entries, $\A^H \A$ is a dense
$(TK\tau + MN) \times (TK\tau + MN)$ matrix, so na\"ively forming and inverting
it by standard dense linear algebra costs $\OO((TK\tau + MN)^3)$, which scales
as $T^3$.

\bibliographystyle{IEEEtran}
\bibliography{IEEEabrv,refs}

\begin{thebibliography}{10}
\providecommand{\url}[1]{#1}
\csname url@samestyle\endcsname
\providecommand{\newblock}{\relax}
\providecommand{\bibinfo}[2]{#2}
\providecommand{\BIBentrySTDinterwordspacing}{\spaceskip=0pt\relax}
\providecommand{\BIBentryALTinterwordstretchfactor}{4}
\providecommand{\BIBentryALTinterwordspacing}{\spaceskip=\fontdimen2\font plus
\BIBentryALTinterwordstretchfactor\fontdimen3\font minus
  \fontdimen4\font\relax}
\providecommand{\BIBforeignlanguage}[2]{{%
\expandafter\ifx\csname l@#1\endcsname\relax
\typeout{** WARNING: IEEEtran.bst: No hyphenation pattern has been}%
\typeout{** loaded for the language `#1'. Using the pattern for}%
\typeout{** the default language instead.}%
\else
\language=\csname l@#1\endcsname
\fi
#2}}
\providecommand{\BIBdecl}{\relax}
\BIBdecl

\bibitem{Andersson2024ICASSP}
M.~Andersson, T.~T. Vu, P.~Frenger, and E.~G. Larsson, ``Uplink symbol
  detection in dynamic {TDD} {MIMO} systems with {AP-AP} interference,'' in
  \emph{Proc. IEEE Int. Conf. Acoust. Speech. Signal Process. (ICASSP)}, 2024,
  pp. 9236--9240.

\bibitem{ngo16}
T.~L. Marzetta, E.~G. Larsson, H.~Yang, and H.~Q. Ngo, \emph{Fundamentals of
  Massive {MIMO}}.\hskip 1em plus 0.5em minus 0.4em\relax Cambridge University
  Press, 2016.

\bibitem{emil20TWC}
E.~{Bj\"{o}rnson} and L.~{Sanguinetti}, ``Making cell-free massive {MIMO}
  competitive with {MMSE} processing and centralized implementation,''
  \emph{IEEE Trans. Wireless Commun.}, vol.~19, no.~1, pp. 77--90, Jan. 2020.

\bibitem{ammar22CST}
H.~A. Ammar, R.~Adve, S.~Shahbazpanahi, G.~Boudreau, and K.~V. Srinivas,
  ``User-centric cell-free massive {MIMO} networks: A survey of opportunities,
  challenges and solutions,'' \emph{IEEE Commun. Surveys Tuts.}, vol.~24,
  no.~1, pp. 611--652, 2021.

\bibitem{Andersson2023Asilomar}
M.~Andersson, T.~T. Vu, P.~Frenger, and E.~G. Larsson, ``Joint optimization of
  switching point and power control in dynamic {TDD} cell-free massive
  {MIMO},'' in \emph{Proc. IEEE Asilomar Conf. Signal. Syst. and Comput.
  (ACSSC)}, 2023, pp. 988--992.

\bibitem{KimCST}
H.~Kim, J.~Kim, and D.~Hong, ``Dynamic {TDD} systems for {5G} and beyond: A
  survey of cross-link interference mitigation,'' \emph{IEEE Commun. Surveys
  Tut.}, vol.~22, no.~4, pp. 2315--2348, 2020.

\bibitem{Mohammadi2023JSAC}
M.~Mohammadi, T.~T. Vu, H.~Q. Ngo, and M.~Matthaiou, ``Network-assisted
  full-duplex cell-free massive {MIMO}: Spectral and energy efficiencies,''
  \emph{IEEE J. Sel. Areas Commun.}, vol.~41, no.~9, pp. 2833--2851, 2023.

\bibitem{Chowdhury2024TCOM}
A.~Chowdhury and C.~R. Murthy, ``Half-duplex {APs} with dynamic {TDD} versus
  full-duplex {APs} in cell-free systems,'' \emph{IEEE Trans. Commun.},
  vol.~72, no.~7, pp. 3856--3872, 2024.

\bibitem{5GNR_crosslink}
3GPP, ``Discussion on duplexing flexibility and cross-link interference
  mitigation schemes,'' \emph{3{GPP TSG RAN WG}1 {M}eeting \#88, R1-1701616},
  2017.

\bibitem{3GPP_config}
------, ``Further enhancements to {LTE TDD} for {DL-UL} interference management
  and traffic adaptation,'' \emph{3{GPP TSG RAN} {M}eeting \#51, RP-110450},
  2011.

\bibitem{Zhang2015CM}
Z.~Zhang, X.~Chai, K.~Long, A.~V. Vasilakos, and L.~Hanzo, ``Full duplex
  techniques for 5{G} networks: self-interference cancellation, protocol
  design, and relay selection,'' \emph{IEEE Commun. Mag.}, vol.~53, no.~5, pp.
  128--137, 2015.

\bibitem{Smida2024PI}
B.~Smida, R.~Wichman, K.~E. Kolodziej, H.~A. Suraweera, T.~Riihonen, and
  A.~Sabharwal, ``In-band full-duplex: The physical layer,'' \emph{Proc. of the
  IEEE}, vol. 112, no.~5, pp. 433--462, 2024.

\bibitem{Kim2024PI}
Y.~Kim, H.-J. Moon, H.~Yoo, B.~Kim, K.-K. Wong, and C.-B. Chae, ``A
  state-of-the-art survey on full-duplex network design,'' \emph{Proc. of the
  IEEE}, vol. 112, no.~5, pp. 463--486, 2024.

\bibitem{Mohammadi2025TCOM}
M.~Mohammadi, Z.~Mobini, H.~Quoc~Ngo, and M.~Matthaiou, ``Ten years of research
  advances in full-duplex massive {MIMO},'' \emph{IEEE Trans. Commun.},
  vol.~73, no.~3, pp. 1756--1786, 2025.

\bibitem{chowdhury2021TC}
A.~Chowdhury, R.~Chopra, and C.~R. Murthy, ``Can dynamic {TDD} enabled
  half-duplex cell-free massive {MIMO} outperform full-duplex cellular massive
  {MIMO}?'' \emph{IEEE Trans. Commun.}, vol.~70, no.~7, pp. 4867--4883, 2022.

\bibitem{Huang2019SJ}
Y.~Huang, B.~Jalaian, S.~Russell, and H.~Samani, ``Reaping the benefits of
  dynamic {TDD} in massive {MIMO},'' \emph{IEEE Systems J.}, vol.~13, no.~1,
  pp. 117--124, 2019.

\bibitem{Zhu2021CL}
Y.~Zhu, J.~Li, P.~Zhu, H.~Wu, D.~Wang, and X.~You, ``Optimization of duplex
  mode selection for network-assisted full-duplex cell-free massive {MIMO}
  systems,'' \emph{IEEE Commun. Lett.}, vol.~25, no.~11, pp. 3649--3653, 2021.

\bibitem{Razlighi2021TWC}
M.~M. Razlighi, N.~Zlatanov, S.~R. Pokhrel, and P.~Popovski, ``Optimal
  centralized dynamic-time-division-duplex,'' \emph{IEEE Trans. Wireless
  Commun.}, vol.~20, no.~1, pp. 28--39, 2021.

\bibitem{fukue22A}
S.~Fukue, H.~Iimori, G.~T.~F. De~Abreu, and K.~Ishibashi, ``Joint access
  configuration and beamforming for cell-free massive {MIMO} systems with
  dynamic {TDD},'' \emph{IEEE Access}, vol.~10, pp. 40\,130--40\,149, 2022.

\bibitem{Yoon2015CL}
C.~Yoon and D.-H. Cho, ``Energy efficient beamforming and power allocation in
  dynamic {TDD} based {C-RAN} system,'' \emph{IEEE Commun. Lett.}, vol.~19,
  no.~10, pp. 1806--1809, 2015.

\bibitem{deOlivindo2018WCL}
E.~de~Olivindo~Cavalcante, G.~Fodor, Y.~C.~B. Silva, and W.~C. Freitas,
  ``Distributed beamforming in dynamic {TDD} {MIMO} networks with {BS} to {BS}
  interference constraints,'' \emph{IEEE Wireless Commun. Lett.}, vol.~7,
  no.~5, pp. 788--791, 2018.

\bibitem{Jayasinghe2018TSP}
P.~Jayasinghe, A.~Tölli, J.~Kaleva, and M.~Latva-aho, ``Bi-directional
  beamformer training for dynamic {TDD} networks,'' \emph{IEEE Trans. Signal
  Process.}, vol.~66, no.~23, pp. 6252--6267, 2018.

\bibitem{daSilva2021WC}
J.~M.~B. da~Silva, G.~Wikström, R.~K. Mungara, and C.~Fischione, ``Full duplex
  and dynamic {TDD}: Pushing the limits of spectrum reuse in multi-cell
  communications,'' \emph{IEEE Wireless Commun.}, vol.~28, no.~1, pp. 44--50,
  2021.

\bibitem{horn1991topics}
R.~A. Horn and C.~R. Johnson, \emph{Topics in Matrix Analysis}.\hskip 1em plus
  0.5em minus 0.4em\relax Cambridge, U.K.: Cambridge University Press, 1991.

\bibitem{Roth1952}
W.~E. Roth, ``The equations {$AX - YB = C$} and {$AX - XB = C$} in matrices,''
  \emph{Proceedings of the American Mathematical Society}, vol.~3, no.~3, pp.
  392--396, 1952.

\bibitem{Baksalary1979LAA}
J.~Baksalary and R.~Kala, ``The matrix equation {$AX - YB = C$},'' \emph{Linear
  Algebra Appl.}, vol.~25, pp. 41--43, 1979.

\bibitem{Ding2006SIAM}
F.~Ding and T.~Chen, ``On iterative solutions of general coupled matrix
  equations,'' \emph{SIAM J. Control Optim.}, vol.~44, no.~6, pp. 2269--2284,
  2006.

\bibitem{khatri1968}
C.~G. Khatri and C.~R. Rao, ``Solutions to some functional equations and their
  applications to characterization of probability distributions,''
  \emph{Sankhya: The Indian Journal of Statistics, Series A (1961-2002)},
  vol.~30, no.~2, pp. 167--180, 1968.

\bibitem{Sidiropoulos2000JoC}
N.~D. Sidiropoulos and R.~Bro, ``On the uniqueness of multilinear decomposition
  of {N}-way arrays,'' \emph{J. Chemom.}, vol.~14, no.~3, pp. 229--239, 2000.

\bibitem{DeLathauwer2008SIAM}
L.~De~Lathauwer, ``Decompositions of a higher-order tensor in block
  terms—part {I}: Lemmas for partitioned matrices,'' \emph{SIAM J. Matrix
  Anal. Appl.}, vol.~30, no.~3, pp. 1022--1032, 2008.

\bibitem{tr36932}
3GPP, ``Scenarios and requirements for small cell enhancements for {E-UTRA} and
  {E-UTRAN},'' \emph{3{GPP TR}~36.932}, 2025.

\bibitem{emil17}
E.~Bj\"{o}rnson, J.~Hoydis, and L.~Sanguinetti, ``Massive {MIMO} networks:
  Spectral, energy, and hardware efficiency,'' \emph{Found. Trends. Signal
  Process.}, vol.~11, no. 3-4, pp. 154--655, 2017.

\bibitem{bai2024JSTSP}
J.~Bai and E.~G. Larsson, ``A unified activity detection framework for massive
  access: Beyond the block-fading paradigm,'' \emph{IEEE J. Sel. Topics Signal
  Process.}, vol.~18, no.~7, pp. 1366--1380, 2024.

\bibitem{horn2012matrix}
R.~A. Horn and C.~R. Johnson, \emph{Matrix analysis}.\hskip 1em plus 0.5em
  minus 0.4em\relax Cambridge, U.K.: Cambridge university press, 2012.

\bibitem{lancaster1985theory}
P.~Lancaster and M.~Tismenetsky, \emph{The Theory of Matrices}.\hskip 1em plus
  0.5em minus 0.4em\relax New York, NY, USA: Academic Press, 1985.

\bibitem{golub2013matrix}
G.~H. Golub and C.~F. Van~Loan, \emph{Matrix Computations}, 4th~ed.\hskip 1em
  plus 0.5em minus 0.4em\relax Baltimore, MD, USA: Johns Hopkins University
  Press, 2013.

\end{thebibliography}

\end{document}